\documentclass[10pt,journal]{IEEEtranTCOM}

\IEEEoverridecommandlockouts	
\usepackage[dvips]{graphicx}
\usepackage{amssymb}
\usepackage{cite}
\usepackage{subfigure}
\usepackage{multirow}
\usepackage{pifont}
\usepackage{hyperref}
\usepackage{bm}
\usepackage{amsmath}
\usepackage{amsthm}
\usepackage{stfloats}
\usepackage{float}

\usepackage{subeqnarray}
\usepackage{algorithm}
\usepackage{algorithmic}
\usepackage{color}
\usepackage{cases}
\usepackage{caption2}
\usepackage{url}
\usepackage{array}
\usepackage{enumerate}
\usepackage{tabularx}
\usepackage{multirow}
\usepackage[table,xcdraw]{xcolor}
\usepackage{diagbox}
\usepackage{booktabs}
\usepackage{colortbl}
\makeatletter

\newcommand{\Rmnum}[1]{\expandafter\@slowromancap\romannumeral #1@}
\makeatother
\renewcommand{\baselinestretch}{1}

\theoremstyle{plain}
\newtheorem{theorem}{\textit{Theorem} \textit}

\newtheorem{corollary}{Corollary}

\theoremstyle{definition}

\theoremstyle{remark}
\newtheorem{remark}{\textbf{Remark}\bf }

\begin{document}
\title{MSE Minimization for Movable Antenna Enhanced Wireless Sensing via Array Geometry Optimization}
\author{Wenrong Chen, \IEEEmembership{Member, IEEE}, Lipeng Zhu, \IEEEmembership{Senior Member, IEEE}, Wenyan Ma, \IEEEmembership{Member, IEEE}, Zhi Chen, \IEEEmembership{Senior Member, IEEE}, and Rui Zhang, \IEEEmembership{Fellow, IEEE}
	\thanks{This work was supported in part by the National Key R\&D Program of China under Grant 2024YFE0200400 and the National Natural Science Foundation of China (NSFC) under grant number 62671044.
	}
	\thanks{
	\IEEEcompsocthanksitem W. Chen and Z. Chen are with the National Key Laboratory of Wireless Communications, University of Electronic Science and Technology of China, Chengdu 611731, China (E-mail: wenrongchen@std.uestc.edu.cn; chenzhi@uestc.edu.cn).
	\IEEEcompsocthanksitem L. Zhu is with the State Key Laboratory of CNS/ATM and the School
	of Interdisciplinary Science, Beijing Institute of Technology, Beijing 100081, China (E-mail: zhulp@bit.edu.cn). 
	\IEEEcompsocthanksitem W. Ma and R. Zhang are with the Department of Electrical and Computer Engineering, National University of Singapore, Singapore 117583 (E-mail: wenyan@u.nus.edu; elezhang@nus.edu.sg).
	}
}
\maketitle

\begin{abstract}
This paper develops adaptive movable-antenna (MA) array structures to enhance wireless sensing performance under arbitrary signal-to-noise ratio (SNR) conditions, using the angle-of-arrival (AoA) estimation mean square error (MSE) as the sensing metric. 
Due to the presence of strong sidelobes in the ambiguity function of sparse MA arrays, the classical Cram\'er-Rao bound (CRB) of AoA estimation MSE fails to characterize the true MSE in the low and medium SNR regimes, rendering CRB-optimized antenna position vector (APV) designs suboptimal for MA-enabled sensing.
To overcome this limitation, we derive a deterministic MSE approximation that captures the local estimation error within the mainlobe of the ambiguity function and the outlier error induced by sidelobes, providing a tractable and accurate optimization objective.
We then propose three adaptive MA structures that are tailored to different hardware and computational requirements. 
For systems with fully independent motors, a movable free array (MFA) is optimized by a two-stage greedy algorithm, which approaches the performance of the globally optimal exhaustive search but with lower computational complexity.
Halving the required motors, a movable nested array (MNA) fixes the dense inner subarray and adjusts only the outer-subarray spacing, reducing the optimization to a one-dimensional search. 
As the most hardware-efficient design, a movable twin array (MTA) is proposed to adjust the inter-subarray spacing only with a single motor, while still achieving significant MSE reduction over fixed array geometries. 
Numerical results show that the derived approximation closely matches the true MSE compared with existing performance bounds, and that the proposed structures outperform the CRB-optimized designs by orders of magnitude in MSE across the tested low-to-medium SNR range.
Moreover, it reveals that in the low-SNR regime, APV reconfiguration is more effective for improving estimation accuracy than simply increasing the number of fixed antennas. 
\end{abstract}

\begin{IEEEkeywords}
Movable antenna, low-SNR sensing, antenna position optimization, MSE approximation, angle of arrival estimation.
\end{IEEEkeywords}

\IEEEpeerreviewmaketitle

\section{Introduction}

Next-generation wireless networks increasingly integrate high-accuracy sensing to support location-aware applications such as autonomous driving and industrial automation\cite{you2025next,liu2022survey}.
Among assorted sensing tasks, angle-of-arrival (AoA) estimation serves as a key function, whose accuracy is quantified by the mean square error (MSE) and is governed by the antenna array geometry\cite{xiao2024channel}. 
Enhancing AoA estimation accuracy calls for a large-aperture array, but the associated hardware complexity and power consumption scale directly with the number of antennas\cite{lu2014overview,wang2024tutorial}.
Sparse arrays address this issue by enlarging the inter-antenna spacing to attain a large aperture, thereby achieving the angular resolution of a much larger dense array with only a small number of elements\cite{roberts2011sparse,cen2009linear}.
However, such wide spacing inevitably introduces prominent sidelobes (also called grating lobes) of the ambiguity function, whose resulting ambiguity becomes increasingly pronounced as the signal-to-noise ratio (SNR) decreases, severely degrading the achievable MSE in the low-to-medium SNR regime\cite{wang2023can}.
Such sparse arrays are conventionally realized with fixed-position antennas (FPAs) whose static geometry cannot adapt to the SNR conditions, thus limiting their achievable sensing performance\cite{zhu2026tutorial}.

Movable antenna (MA) overcomes this limitation by enabling flexible antenna position adjustment according to the varying SNR conditions, thereby outperforming traditional FPAs\cite{zhu2024movabletcm,ma2023mimo}.
Exploiting this spatial mobility, an MA system reconfigures a sparse set of antennas into a large aperture within a designated movement region, thereby enhancing angular resolution beyond that of FPAs for improving communication performance \cite{zhu2024movable,mei2024movable,zhu2024movabletwc}. 
In addition, antenna array rotation can be jointly exploited with antenna positioning to further enhance the system performance\cite{shao20256dma,shao20246d,shao20256d}.
Crucially, this geometry can either be pre-configured offline for a specific deployment or be adapted online to the operating SNR, thus providing MA with a spatial DoF for sensing that FPAs cannot offer\cite{ma2024movable,wang2025movable}.

The key to fully leveraging the potential of MAs lies in optimizing the antenna position vector (APV), which shapes the spatial sampling pattern of the incident wavefront and thereby determines the achievable AoA estimation accuracy\cite{mao2025movable,ma2026movabletsp}.
Traditionally, the Cram\'er-Rao bound (CRB) is widely adopted as a tractable optimization objective for APV design owing to its closed form and asymptotic optimality at high SNR\cite{song2023intelligent,ma2024movable}.
However, the CRB characterizes only the local estimation error within the mainlobe and cannot capture the sidelobe-induced outlier errors inherent to sparse arrays\cite{kay1993fundamentals,van2002modulation}.
At low-to-medium SNR, such sidelobe ambiguities can cause the estimator to lock onto an incorrect outlier peak, incurring large errors that dominate the actual MSE and render the CRB unattainable\cite{khan2010analysis,athley2001optimization}.
CRB-based APV designs therefore become increasingly suboptimal as the SNR degrades\cite{khan2009explicit}.
This calls for both a more accurate MSE expression as an optimization objective, and a novel APV design suited to the low-SNR regime sensing.

Comparative studies have shown that the preferred array geometry varies with the SNR condition\cite{athley2007radar}.
At high SNR, the MSE-minimizing APV is a two-subarray structure in which the antennas aggregate at both ends of the movable region to form two split uniform linear arrays (SULA), maximizing the effective aperture\cite{athley2000direction,ma2024movable}.
At low-to-medium SNR, sidelobe-suppressing geometries are favoured instead\cite{athley2003space}.
The minimum redundancy array (MRA) maximizes the spatial degrees of freedom by minimizing redundant covariance lags, but admits no closed-form construction and can only be obtained via extensive numerical search\cite{moffet1968minimum}.
As a closed-form alternative toward the same end, the nested array (NA) offers a near-optimal structure by concatenating a co-designed dense inner and sparse outer subarray\cite{pal2010nested}.
Since no single static geometry remains optimal across a wide SNR range, the array geometry needs to adapt to the operating SNR, motivating the use of MA for this purpose.
Along this line, the state-of-the-art MA design in \cite{ma2026movable} minimizes the CRB subject to a maximum sidelobe-level constraint via successive convex approximation (SCA) at each SNR.
However, it still lacks an explicit MSE-based objective and requires fully independent motors for every antenna to realize this irregular repositioning, incurring high hardware complexity.

To characterize the MSE more accurately than the CRB, tighter bounds have been developed, including the Weiss-Weinstein bound (WWB), the Ziv-Zakai bound (ZZB), and the explicit ZZB (EZZB)\cite{ziv1969some,weiss1985lower,bell1997extended,bell1996explicit,vertatschitsch1991impact,khan2013explicit,athley2005threshold}. 
Derived from the covariance inequality, the WWB provides a tight lower bound in the low-SNR regime, but its evaluation relies on a set of randomly selected test points that admits no deterministic expression\cite{weiss1985lower}.
The ZZB instead relates the MSE to the error probability of a binary hypothesis test, thereby inherently accounting for sidelobe ambiguities\cite{bell1997extended}, albeit at the cost of tedious numerical integration\cite{bell1996explicit}.
This tractability issue is partially resolved by the EZZB, which  decomposes the MSE into the a priori random error, the mainlobe local error, and the sidelobe-induced error\cite{khan2013explicit}.
Nevertheless, being lower bounds rather than the achievable MSE, their tightness depends on the APV-dependent sidelobe characteristics, since the sidelobe contribution is lower-bounded through the envelope of the ambiguity function rather than its actual structure\cite{athley2005threshold}.
This motivates the derivation of a deterministic MSE approximation that is both accurate and tractable as the objective for APV optimization. 

In this paper, we propose adaptive MA structures for AoA estimation whose APV is dynamically reconfigured across the low-to-high SNR regime, spanning three approaches tailored to different levels of motor independence and computational complexity. 
The main contributions are as follows:


1) We derive a deterministic MSE approximation that jointly captures the local estimation error within the mainlobe of the ambiguity function and the outlier error induced by sidelobes, remaining accurate across the low-to-high SNR regime where the CRB is no longer valid.
This approximation serves as the tractable optimization objective underlying the three adaptive APV designs developed in this paper. 

2) Building on this approximation, we develop three adaptive MA structures that achieve different trade-offs between computational and hardware complexities to enhance AoA estimation accuracy at low-to-medium SNR.

\begin{itemize}

\item \textit{Movable Free Array (MFA)}. 
For systems retaining full spatial degree of freedom, we propose a two-stage greedy algorithm that optimizes the APV into the MFA, approaching the globally optimal exhaustive search while reducing the computational complexity from an exponential to a polynomial order.

\item \textit{Movable Nested Array (MNA)}. MNA fixes the dense inner subarray and adjusts only the outer-subarray spacing, thus halving the required motors and reducing the design to a one-dimensional search while attaining MSE comparable to the MFA.

\item \textit{Movable Twin Array (MTA)}. MTA adjusts only the inter-subarray spacing with a single motor, attaining significant MSE reduction over FPAs at the lowest hardware cost. 
For this structure, we derive a closed-form MSE expression that reveals its structural trade-off between sidelobe suppression and aperture maximization.
\end{itemize}





3) Numerical results show that the MFA and MNA closely approach the performance of the globally optimal exhaustive search, with MNA requiring only half the motors, while the single-motor MTA still attains orders-of-magnitude MSE reduction over FPAs.
Moreover, SNR-adaptive APV reconfiguration proves more effective for MSE reduction than simply increasing the number of fixed antennas in the low-to-medium SNR regime.

The rest of the paper is organised as follows.
Section II introduces the MA system model for AoA estimation. 
Section III presents the derivation of the MSE approximation, incorporating both local errors and sidelobe-induced outliers. 
Section IV describes the three proposed adaptive MA geometries, including the MFA, the MNA, and the MTA.
Section V provides extensive numerical results and performance analysis to validate the proposed MSE model and adaptive structures. 
Finally, the paper is concluded in Section VI.

\section{System Model and Problem Formulation}
\begin{figure}[tp]
	\centering
	\includegraphics[width=0.9\linewidth]{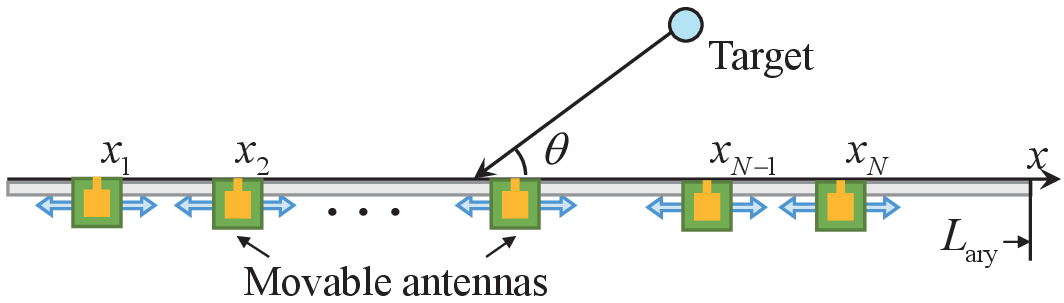}
	\caption{One-dimensional MA array for target angle estimation.}
	\label{fig_system}	
\end{figure}

We consider a linear MA array with $N$ elements to estimate a target's AoA, as illustrated in Fig.~\ref{fig_system}.
The positions of the elements can be adjusted individually with the aid of motors, and the position of the $n$-th MA is denoted by $x_n\in[0,L_{\rm{ary}}]$, where $L_{\rm{ary}}$ is the length of the region for antenna movement.
The APV of all $N$ MAs is given by
\begin{equation}
	\textbf{x}\triangleq[x_1,x_2,\ldots,x_{N}]^{\rm{T}}\in \mathbb{R}^{N}, 
\end{equation}
where $0\leq x_1<x_2<\ldots<x_N\leq L_{\rm{ary}}$.

Under far-field line-of-sight (LoS) propagation between the target and the MA array, the spatial AoA is defined as $u_0\triangleq \cos{\theta}$, where $\theta\in[0,\pi]$ is the physical steering angle of the target-receiver LoS path relative to the MA plane. 
The corresponding steering vector, as a function of the APV $\textbf{x}$ and the spatial AoA $u_0$, is written as
\begin{equation}
	{\bf{\alpha }}({\bf{x}},u_0) = {\left[ {{e^{j\frac{{2\pi }}{\lambda }{x_1}u_0}},{e^{j\frac{{2\pi }}{\lambda }{x_2}u_0}},\ldots,{e^{j\frac{{2\pi }}{\lambda }{x_N}u_0}}} \right]^{\rm T}}\in {\mathbb{C}}^{N},
	\label{array vector}
\end{equation}
where $\lambda$ is the wavelength, such that the LoS channel vector between the target and the MA array is given by
\begin{equation}
	{\bf{h}}({\bf{x}},u_0 ) = \beta {\bf{\alpha }}({\bf{x}},u_0),
	\label{channel}
\end{equation}
where $\beta$ denotes the complex fading coefficient from the target to the array origin.

Considering the use of the multiple signal classification (MUSIC) algorithm to estimate the spatial AoA $u_0$, the receiver collects $T$ signal snapshots of the reflected echoes, and the received signal at the $t$-th ($t=1,2,\ldots,T$) snapshot is given by
\begin{equation}
	{{\bf{y}}_t} = {\bf{h}}({\bf{x}},u_0){s_t} + {{\bf{z}}_t},
\end{equation}
where ${s_t}$ is the signal reflected from the target at the $t$-th snapshot with average power ${\bf{E}}[{\left| {{s_t}} \right|^2}] = {P_t}$, and ${{\bf{z}}_t}\sim{\cal C}{\cal N}(0,{\sigma ^2}{{\bf{I}}_N})$ is the additive white Gaussian noise vector.
The receive SNR is accordingly defined as ${\rm{SNR}}\triangleq{|\beta|^2}/{\sigma^2}$, representing the average per-antenna, per-snapshot SNR. 

Stacking the $T$ snapshots, the received signal matrix is written as
\begin{equation}
	{\bf{Y}} = {\bf{h}}({\bf{x}},u_0){{\bf{s}}^{\rm T}} + {\bf{Z}}, 
	\label{Y_matrix}
\end{equation}
where ${\bf{s}} = {[{s_1},{s_2},\ldots,{s_T}]^{\rm T}} \in {\mathbb{C}^{\rm T}} $ and ${\bf{Z}} = [{{\bf{z}}_1},{{\bf{z}}_2},\ldots,{{\bf{z}}_T}] \in {\mathbb{C} ^{N \times T}}$, whose columns are independently and identically distributed following ${\cal CN}(0,\sigma^2{\bf{I}}_N)$.

The sample covariance matrix is computed as 
\begin{align}
	{{\bf{R}}_{\bf{Y}}}
	&=\frac{1}{T}{\bf{Y}}{{\bf{Y}}^H}\notag
	\label{covariance}
\end{align}
from which the spatial spectrum is constructed as 
\begin{equation}
	V(\bar u) = {{\bf{\alpha }}^H}({\bf{x}},\bar u){{\bf{R}}_{\bf{Y}}}{\bf{\alpha }}({\bf{x}},\bar u),
	\label{func_V}
\end{equation}
and the AoA estimate is obtained by locating its peak, i.e.,
\begin{equation}
	\hat u = \arg \mathop {\max }\limits_{\bar u \in [ - 1,1]} V(\bar u).
	\label{def_estAOA}
\end{equation}

In this paper, we aim to design the APV that minimizes the AoA estimation MSE, given in \eqref{def_mse}, based on the noisy measurements in \eqref{Y_matrix}.
The estimation accuracy is quantified by the MSE, defined as
\begin{equation}
	{\rm{MSE}}({\bf{x}}) \triangleq {\bf{E}}\left[ {{{(\hat u - u_0)}^2}} \right].
	\label{def_mse}
\end{equation} 
Without loss of generality, we assume the first MA is at the origin, i.e., $x_1=0$, and to mitigate mutual coupling among MAs, the minimum spacing between adjacent elements is set to half wavelength.
Accordingly, the APV optimization problem aiming to minimize the MSE is formulated as
\begin{subequations}
	\label{problem}
	\begin{align}
		{\rm{(Q)}}\quad\min_{\bf{x}}\quad&
		{\rm MSE}({\bf{x}})\\
		{\rm s.t.}\quad&0=x_1<x_2<\cdots<x_N\leq L_{\rm{ary}},\label{condi_x}\\
		&x_{n+1}-x_n\geq\frac{\lambda}{2},~n=1,\ldots,N-1.
		\label{cons_space}
	\end{align}
\end{subequations}

The main challenge lies in obtaining an accurate tractable expression of \eqref{problem}.
By the property of the second-order moment\cite{cinlar2013introduction}, the MSE can be decomposed into a variance term and a bias term, i.e.,
\begin{equation}
	{\rm{MSE(}}{\bf{x}}{\rm{)}} = {\mathop{\rm var}} [\hat u] + {\left( {{\bf{E}}[\hat u] - u_0} \right)^2}, 
	\label{MSE_var}
\end{equation}
where the variance term characterizes the local estimation error within the mainlobe, while the bias term captures the outlier error incurred when the estimator locks onto a sidelobe peak. 
A common approach is to approximate \eqref{MSE_var} by the CRB, a tight lower bound on the variance term $\mathop{\rm var}[\hat u]$ that is accurate in the high-SNR regime\cite{liu2022survey}. 
However, owing to the sparse nature of MA arrays, the sidelobe-induced bias term becomes non-negligible as the SNR degrades, rendering CRB-based APV optimization ineffective in the low-to-medium SNR regime\cite{athley2000direction}. 
This motivates the tractable MSE approximation derived in the following section. 

%

\section{MSE Approximation}

To solve \eqref{problem}, we decompose \eqref{def_mse} into the mainlobe error, characterizing the local estimation variance, and the sidelobe-induced outlier error, accounting for the large deviation caused by selecting an incorrect sidelobe peak\cite{athley2005threshold}, expressed as
\begin{align}
	{\rm{MSE}}({\bf x})
	&\approx \Pr[\mathrm{mainlobe}]{\bf{E}}\left[ {{{(\hat u - u_0)}^2}\left| {\rm{mainlobe}} \right.} \right]\notag\\
	&+ \Pr [{\rm{sidelobes}}]{\bf{E}}\left[ {{{(\hat u - u_0)}^2}\left| {\rm{sidelobes}} \right.} \right].
	\label{MSE_total}
\end{align}
Since the probability of selecting a sidelobe peak is much higher than that of points in its vicinity, we approximate the sidelobe region by $N_p$ discrete peaks and treat the peak-selection events as mutually exclusive\cite{athley2005threshold}.
Denote the probability of selecting the $n$-th sidelobe peak located at $u_n$ as
\begin{equation}
	P_n\triangleq\Pr[V(u_n)>V(u_0)],
	\label{Pn}
\end{equation}
and the mainlobe probability is thus
\begin{equation}
	{P_0}\triangleq\Pr[{\rm{mainlobe}}] \approx 1-\sum_{n=1}^{N_p}P_n.
	\label{P0}
\end{equation}
Substituting \eqref{Pn}, \eqref{P0} into \eqref{MSE_total}, the MSE is approximated as
\begin{equation}
	{\rm{MSE}}({\bf{x}})\! \approx\! \left(\! {1 - \sum\limits_{n = 1}^{{N_p}} {{P_n}} } \!\right){\bf{E}}[{\epsilon_u}^2] + \sum\limits_{n = 1}^{{N_p}} {{P_n}{{({u_n} - {u_0})}^2}},
	\label{MSE_split}
\end{equation}
where ${\bf{E}}[\epsilon_u^2]$ denotes the mainlobe estimation error.
According to the tail probability representation of the second moment, ${\bf{E}}[\epsilon_u^2]$ can be derived as \cite{cinlar2013introduction}
\begin{equation}
	{\bf{E}}[{\epsilon_u}^2]\triangleq\frac{1}{2}\int_0^{{u_s}} {\Pr \left( {\left| {\hat u - {u_0}} \right| \ge \frac{\delta }{2}} \right)\delta\, d\delta },
	\label{error_main}
\end{equation}
where $u_s$ is the mainlobe width.
The closed-form evaluation of \eqref{MSE_split} is given by the following theorem.


\begin{theorem}
\label{Theorem_MSE}
For the considered linear MA array with APV ${\bf{x}}$, the MSE of AoA estimation based on $T$ snapshots under SNR can be approximated as
\begin{align}
	\!{\rm{MSE}}({\bf{x}})\approx&\left(\!1-\!\sum_{n=1}^{N_p}P_n\!\right)\!\!\left[\!\frac{T\gamma}{J}\Phi\left(\sqrt{\frac{T\gamma}{2}}\right)\!+\! J^{-1}\Gamma_{{3}/{2}}\left(\frac{T\gamma}{4}\!\right)\!\right]\!\notag\\
	&+\sum\nolimits_{n=1}^{N_p}P_n(u_n-u_0)^2 ,	
	\label{MSE_ap}
\end{align}
where $\gamma\triangleq\frac{{{N^2}{\rm{SN}}{{\rm{R}}^2}}}{{1 + N{\rm{SNR}}}}$, $\Phi (z) \triangleq\int_z^\infty  {\frac{1}{{\sqrt {2\pi } }}{e^{ - {v^2}/2}}dv} $, $\Gamma_{3/2}(\cdot)$ is the regularized lower incomplete gamma function of order $3/2$, $J \triangleq\frac{{8{\pi ^2}T\gamma }}{{N{\lambda ^2}}}{{\bf{x}}^{\rm T}}\left( {{{\bf{I}}_N} - \frac{1}{N}{\bf{1}}{{\bf{1}}^{\rm T}}} \right){\bf{x}}$ is the Fisher information, and $P_n$ denotes the probability of selecting the $n$-th sidelobe.\footnote{The closed-form expression of $P_n$ and its derivation are provided in \eqref{Pn_expression} of Appendix~\ref{appendix_MSE}.}

As ${\rm{SNR}}\to\infty$, the MSE converges to the CRB, i.e., 
\begin{equation}
	\begin{split}
		\mathop {\lim }\limits_{{\rm{SNR}} \to \infty } {\rm{MSE}}({\bf{x}}) 
		= {J^{ - 1}}= \frac{{{\lambda ^2}}}{{8{\pi ^2}\gamma }}\frac{1}{{{{\bf{x}}^{\rm T}}\left( {{{\bf{I}}_N} - \frac{1}{N}{\bf{1}}{{\bf{1}}^{\rm T}}} \right){\bf{x}}}}.
	\end{split}	
	\label{MSE_theorem_highSNR_ap}
\end{equation}
\end{theorem}

\begin{IEEEproof}
	The proof is given in Appendix \ref{appendix_MSE}.
\end{IEEEproof}

%
%
%

\begin{remark}
\label{remark}
Since $J^{-1}\triangleq {\rm{CRB}}$, Eq. \eqref{MSE_theorem_highSNR_ap} shows that at high-SNR regimes, the MSE reduces to the CRB, expressed as a function of the APV as
\begin{equation}
	{\rm{CRB}} = \frac{{{\lambda ^2}(1 + N{\rm{SNR}})}}{{8{\pi ^2}T{N^2}{\rm{SN}}{{\rm{R}}^2}}}\frac{1}{{{\rm{var}}[{\bf{x}}]}},
	\label{CRB_x}
\end{equation}
which is minimized by maximizing the spatial variance of the APV, i.e., placing antennas at the two extremities of the movement region\cite{ma2024movable}.
In contrast, at low-SNR regimes, the sidelobe-related term $P_n$ becomes non-negligible, and suppressing it prefers a densely-spaced array\cite{athley2007radar}.

This reveals a fundamental trade-off between enlarging the effective aperture and suppressing the sidelobe level across the SNR range, motivating an adaptive APV design that dynamically balances the two according to the operating SNR. 
\end{remark}

\begin{algorithm}[t]
	\caption{Two-Stage Greedy Search for MFA Optimization}
	\label{alg_greedy}
	\begin{algorithmic}[1]
		\REQUIRE MA number $N$, aperture length $L_{\rm ary}$, position resolution $\Delta d$.
		\ENSURE Optimized APV $\textbf{x}_{\rm MFA}^*$. 
		\vspace{0.3em}
		\STATE \textbf{Stage 1: Iterative Element Pruning}
		\STATE Initialize ${\bf{x}}_{{\rm{itr}}}^{(0)} \gets {\left[ {0,\Delta d,\ldots,\left( {\left\lfloor {\frac{{{L_{{\rm{ary}}}}}}{{\Delta d}}} \right\rfloor  - 1} \right)\Delta d} \right]^{\rm T}}$;
		\FOR{$i = 1$ \TO $N_{\rm all} - N$}
		\STATE Obtain ${\bf{x}}_{\rm itr}^{(i)}$ by removing one element from ${\bf{x}}_{\rm itr}^{(i-1)}$ according to \eqref{iter}; 
		\ENDFOR
		\STATE ${\bf{x}}_{\rm int} \gets {\bf{x}}_{\rm itr}^{(N_{\rm all}-N)}$;
		\vspace{0.5em}
		\STATE \textbf{Stage 2: Position Refinement}
		\STATE ${\bf{x}}_{\rm pos}^{(0)} \gets {\bf{x}}_{\rm int}$, $i\gets1$; 
		\REPEAT 
		\FOR{$n = 1$ \TO $N$}
		\STATE Update the position of the $n$-th element to \eqref{xn_i} from candidate set \eqref{candi};
		\ENDFOR
		\STATE Form ${\bf{x}}_{\rm pos}^{(i)}$ via \eqref{xpos_i}, $i\gets i+1$;
		\UNTIL{${\bf{x}}_{{\rm{pos}}}^{(i)} = {\bf{x}}_{{\rm{pos}}}^{(i - 1)}$} 
		\STATE $\textbf{x}_{\rm MFA}^* \gets {\bf{x}}_{\rm pos}^{(i)}$. 
	\end{algorithmic}
\end{algorithm}

\section{Array Geometry Optimization}

\subsection{Movable Free Array}
The MSE approximation in Theorem~\ref{Theorem_MSE} depends on the sidelobe peak locations related to the APV.
Consequently, solving  \eqref{problem} requires an exhaustive search over all feasible APVs, incurring a high computational complexity. 
To address this, we propose an MFA optimized by a two-stage search algorithm, detailed below and summarized in Algorithm~\ref{alg_greedy}. 

\subsubsection{Stage 1: Iterative Element Pruning}
The array aperture $[0,L_{\rm{ary}}]$ is discretized with a fine position resolution $\Delta d<\lambda/2$, yielding the candidate location set ${\cal A}_{\rm{all}} = [0:\Delta d:{L_{{\rm{ary}}}}]$.
The total number of  candidate positions is
\begin{equation}
	N_{\rm{all}}={\left\lfloor 
	\frac{{L_{{\rm{ary}}}}}{\Delta d} \right\rfloor }\label{nall}.
\end{equation}

We begin by densely filling the available moving region with all $N_{\rm{all}}$ candidates, i.e., ${\bf{x}}_{{\rm{itr}}}^{(0)} = {[0,\Delta d,\ldots,({N_{{\rm{all}}}} - 1)\Delta d]^{\rm T}}$, and then iteratively remove one redundant element at a time. 
At the $i$-th iteration, the $k$-th element of ${\bf{x}}_{{\rm{itr}}}^{(i)}$ that induces the smallest MSE increment is removed, i.e., 
\begin{equation}
	{\bf{x}}_{{\rm{itr}}}^{(i + 1)} = \mathop{ \arg\,\, \min }\limits_{k \in \{ 1,\ldots,{N_{{\rm{all}}}} - i\} } {\rm{MSE}}\left( {{{[{\bf{x}}_{{\rm{itr}}}^{(i)}]}_{i \in \{ 1,\ldots,{N_{{\rm{all}}}} - i\} \backslash \{ k\} }}} \right).
	\label{iter}
\end{equation}
\eqref{iter} is repeated until only $N$ elements remain, yielding the intermediate APV ${\bf{x}}_{\rm int} = {\bf{x}}_{\rm itr}^{(N_{\rm all}-N)}$, which serves as the initial input to Stage 2. 
Note that, since the pruning in \eqref{iter} does not explicitly enforce the spacing constraint \eqref{cons_space}, ${\bf{x}}_{\rm int}$ may violate it and thus requires further adjustment in Stage 2.

\subsubsection{Stage 2: Position Refinement}
In this stage, each element position is iteratively refined to further reduce the MSE while enforcing \eqref{cons_space}.
It starts from the intermediate APV output of Stage 1, expressed as
\begin{equation}
	{\bf{x}}_{{\rm{pos}}}^{(0)} = {{\bf{x}}_{{\rm{int}}}} = {[x_1^{(0)}, \ldots ,x_N^{(0)}]^{\rm T}}.
\end{equation}
At the $(i+1)$-th iteration, each element is updated in turn from $n=1$ to $N$, with all other elements held fixed, $x_n^{(i)}$ is relocated to $x_n^{(i+1)}$, the position within the feasible candidate set that minimizes the MSE, i.e.,
\begin{equation}
	x_n^{(i + 1)} = \arg \mathop {\min }\limits_{{x_n} \in {{\cal C}_{n}^i}} {\rm{MSE}}\left( {[x_1^{(i)},...x_{n - 1}^{(i)},{x_n},x_{n + 1}^{(i)}...,x_N^{(i)}]^{\rm T}} \right),
	\label{xn_i}
\end{equation}
where the candidate set ${{\cal C}_{n}^i}$ enforces the minimum spacing required by \eqref{cons_space}, expressed as
\begin{equation}
	{{\cal C}_{n}^i} = \left\{ {{x_n} \in {{\cal A}_{{\rm{all}}}}\left| {\left| {{x_n} - x_m^{(i)}} \right| \ge \frac{\lambda}{2},\forall m = \{ 1,\ldots,N\} \backslash \{ n\} } \right.} \right\}. 
	\label{candi}
\end{equation}
Sweeping through all $N$ elements via \eqref{xn_i} constitutes one iteration, producing the updated APV
\begin{equation}
	{\bf{x}}_{{\rm{pos}}}^{(i + 1)} = {[x_1^{(i + 1)}, \ldots ,x_N^{(i + 1)}]^{\rm T}}.
	\label{xpos_i}
\end{equation}
The iteration terminates once ${\bf{x}}_{{\rm{pos}}}^{(I+1)}={\bf{x}}_{{\rm{pos}}}^{(I)}$, where the APV converges to a locally optimal solution $\textbf{x}_{\rm{MFA}}^{*}={\bf{x}}_{{\rm{pos}}}^{(I+1)}$, constituting the proposed MFA.

Note that the number of iterations increases with the total number of candidate positions. 
Consequently, a large aperture or a fine position resolution increases the complexity, limiting its practicality for applications with limited computational capability.
Moreover, $\textbf{x}_{\rm{MFA}}^{*}$ lacks an explicit geometry, requiring a fully independent motor for every element and thereby increasing the hardware complexity.

To address these limitations, we further propose two low-complexity MA structures that maintain near-optimal performance while substantially reducing both complexities.

\begin{figure}[tp]
	\centering
	\includegraphics[width=\linewidth]{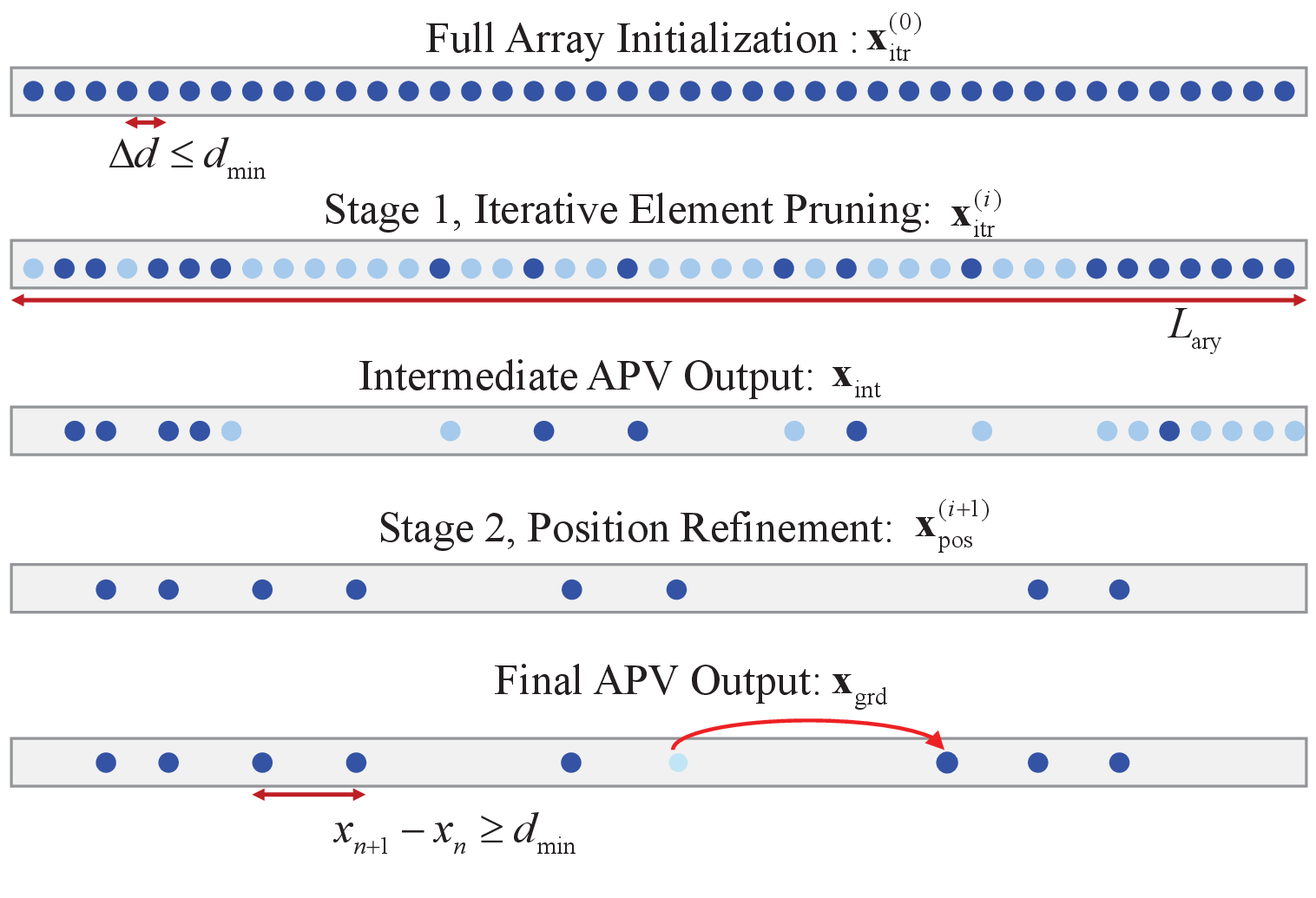}
	\caption{Illustration of the proposed Greedy-search algorithm for MFA optimization.}
	\label{fig_Greedy_search_config}	
\end{figure}

\subsection{Movable Nested Array}
\subsubsection{Design Concept}
Reducing the redundant spatial covariance lags of the inter-element spacings improves AoA estimation resolution\cite{athley2007radar}. 
The array redundancy is defined as the ratio between the total number of spacings and the number of distinct spacings ${\cal D}=\{x_i-x_j|1\le i,j\le N\}$,
\begin{equation}
	R = \frac{N(N-1)}{2L_{\rm{uni}}},\quad L_{\rm{uni}}=|{\cal D}|,
	\label{redundancy}
\end{equation}
where ${L_{{\rm{uni}}}} = \left| {\cal D} \right|$ is the number of distinct lags\cite{moffet1968minimum}.
The design goal is thus to minimize \eqref{redundancy}, i.e., to maximize the number of distinct lags $L_{\rm{uni}}$ for a given $N$\cite{moffet1968minimum}.
However, $L_{\rm{uni}}$ admits no closed-form expression in terms of the antenna positions, so the theoretically optimal geometry, the MRA\cite{moffet1968minimum}, can be obtained via exhaustive search and lacks a specific pattern, hindering its extension to adaptive MA systems. 

Towards the same objective while admitting a closed-form structure, \cite{pal2010nested} proposed a suboptimal the two-level NA, which concatenates a dense inner subarray and a sparse outer subarray.
As plotted in Fig.~\ref{fig_GNA_SPAconfig}(a), the APV of traditional NA is given by\cite{pal2010nested}
\begin{equation}
	{\bf x}_{\rm NA}=[{\bf{x}}_{\mathrm{NA},1}^{\rm T},{\bf{x}}_{\mathrm{NA},2}^{\rm T}]^{\rm T},
\end{equation}
where
\begin{subequations}
	\label{NA_geo}
	\begin{align}
		{\bf{x}}_{\mathrm{NA},1} &= \frac{\lambda }{2}\cdot {[0,1,2,\ldots,{N_1} - 1]^{\rm T}},\\
		{\bf{x}}_{\mathrm{NA},2} &= \frac{\lambda }{2}{N_1} + \frac{\lambda }{2}({N_1} + 1) \cdot {[0,1,2,\dots,{N_2} - 1]^{\rm T}},
	\end{align}
\end{subequations}
with $N_1$ and $N_2=N-N_1$ denoting the number of MAs in the inner and outer subarrays, respectively.

\subsubsection{Design Algorithm}
\begin{figure}[tp]
	\centering
	\includegraphics[width=0.9\linewidth]{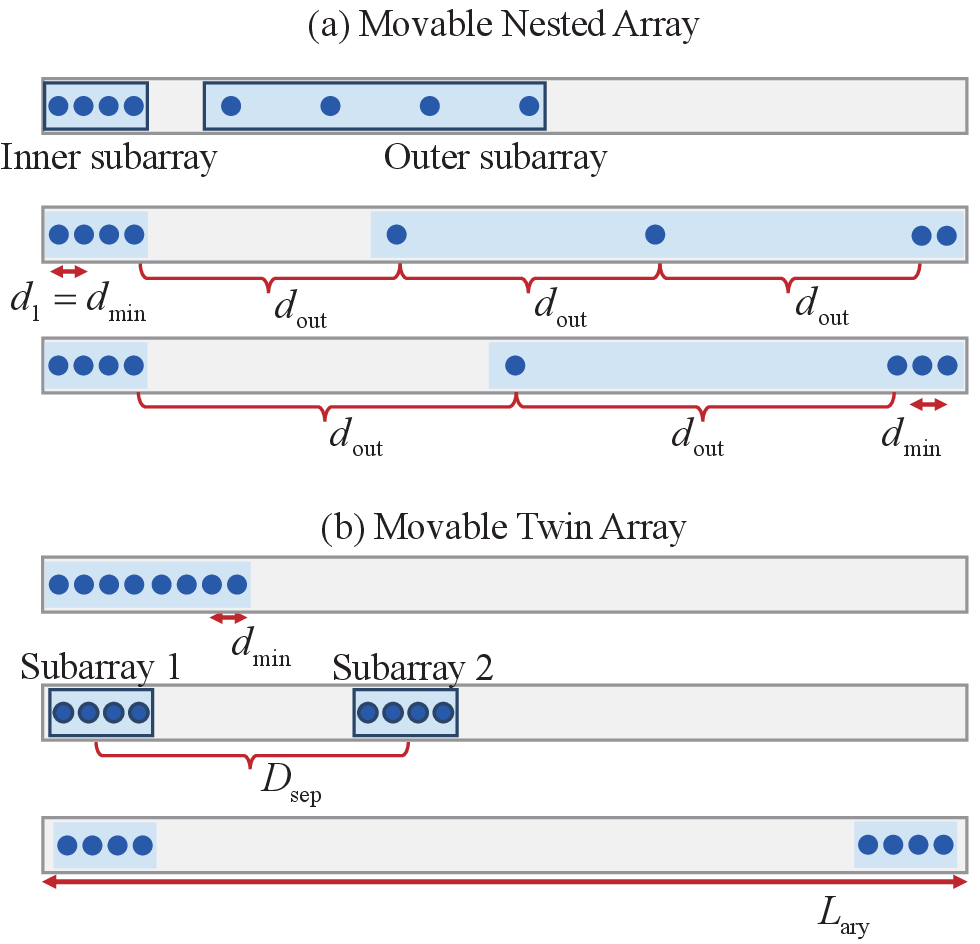}
	\caption{Illustration of the proposed (a) MNA and (b) MTA.}
	\label{fig_GNA_SPAconfig}	
\end{figure}
Building on the NA, we propose the MNA, as illustrated in Fig.~\ref{fig_GNA_SPAconfig}(a). 
As revealed in Remark~\ref{remark}, adapting to low- and high-SNR conditions requires the array geometry to balance the effective aperture against the sidelobe level.
To this end, the proposed MNA retains the partition $(N_1,N_2)$ of the NA, but fixes only the inner subarray ${\bf{x}}_{\rm{MN},1}$ at one aperture end with spacing $\lambda/2$, while the outer subarray ${\bf{x}}_{\rm{MN},2}$ is adjusted with SNR within the residual aperture.

For this structure, the number of distinct correlation lags is given by\cite{pal2010nested}
\begin{equation}
	L_{\rm NA,uni}=2(N-N_1)(N_1+1)-1.
	\label{lna_uni}
\end{equation}
Maximizing \eqref{lna_uni} yields the optimal inner-subarray size,
\begin{equation}
	N_1 =
	\begin{cases}
		N/2, & N \text{ is even},\\
		(N-1)/2, & N \text{ is odd}.
	\end{cases}
	\label{N1_opt}
\end{equation}

With $N_1$ given by \eqref{N1_opt}, the APV of MNA is
\begin{equation}
	{\bf{x}}_{\rm{MN}} = [{\bf{x}}_{\rm{MN},1}^{\rm T},{\bf{x}}_{\rm{MN},2}^{\rm T}]^{\rm T},
\end{equation}
where the inner subarray is fixed as
\begin{equation}
	{{\bf{x}}_{{\rm{MN}},{\rm{1}}}}= \frac{\lambda }{2} \cdot {[0,1,2,\ldots,{N_1} - 1]^{\rm T}},
	\label{A1}
\end{equation}
and the position of the $n$-th $(n\in[1,{N_2}])$ outer subarray MA is
\begin{equation}
	{{\bf{x}}_{{\rm{MN}},{\rm{2}}}}(n)\! =\! \min \left[ {\frac{\lambda }{2}{N_1} + n{d_{\rm{out}}},{L_{{\rm{ary}}}} \!-\! ({N_2} - n)\frac{\lambda }{2}} \right],
	\label{A2}
\end{equation}
in which the first term spaces the outer elements uniformly by $d_{\rm{out}}$ from the inner-subarray boundary, while the second term is triggered once an element reaches the array edge, causing it to aggregate there with the minimum spacing $\lambda/2$, as illustrated by the three configurations in Fig.~\ref{fig_GNA_SPAconfig}(a).

The optimal outer-subarray spacing $d_{\rm{out}}^*$ follows from a one-dimensional search minimizing the MSE approximation \eqref{MSE_ap}:
\begin{equation}
	\begin{split}
		d_{\rm{out}}^* &= \mathop{\arg\min}\limits_{d_{\rm{out}}} \mathrm{MSE}\left({\bf{x}}_{\rm{MN}}(d_{\rm{out}})\right)\\
		&\text{s.t.} \quad d_{\rm{out}} \in \left[\frac{\lambda}{2},\, L_{\rm{ary}}-N_1\frac{\lambda}{2}\right].
	\end{split}
	\label{d2_search}
\end{equation}

Unlike the traditional NA, whose outer spacing is fixed at $(N_1+1)\lambda/2$ and thus cannot adapt to the operating SNR, the MNA leverages antenna mobility to reconfigure this single parameter $d_{\rm{out}}^*$, continuously trading off the effective aperture against the sidelobe level.
At low SNR, $d_{\rm{out}}^*$ converges to the minimum spacing $\lambda/2$, reducing MNA to a compact ULA-like structure that suppresses the sidelobe level.
As SNR increases, $d_{\rm{out}}^*$ gradually expands to enlarge the effective aperture, until MNA converges to the SULA structure, the high-SNR-optimal geometry, at sufficiently high SNR. 

Through \eqref{d2_search}, the MNA spans the full range between the two limiting structures within the available moving region, which the fixed-spacing NA cannot achieve.

\subsection{Movable Twin Array}
The number of motors needed to driven MNA increases linearly with the MA number. 
To further reduce the hardware complexity, we revisit the two extreme optimal structures identified in Remark~\ref{remark}.
At high SNR, \cite{ma2024movable} has proven that the MSE-optimal APV is the SULA that maximizes the effective aperture, whereas at low SNR, dense ULA-like geometry is preferred to suppress the sidelobe level\cite{athley2007radar}.

Building on this result, we propose the MTA, as plotted in Fig.~\ref{fig_GNA_SPAconfig}(b), which divides the $N$ elements evenly into two half-wavelength-spaced subarrays and adjusts the inter-subarray spacing $D_{\rm{sep}}$.
The APV of MTA can be expressed as 
\begin{equation}
	{{\bf{x}}_{{\rm{MT}}}} = {[{{\bf{x}}_{{\rm{MT,1}}}}^{\rm T},{{\bf{x}}_{{\rm{MT,2}}}}^{\rm T}]^{\rm T}},
\end{equation}
in which
\begin{subequations}
	\label{MT_geo}
	\begin{align}
		&{{\bf{x}}_{{\rm{MT}},{\rm{1}}}} = \frac{\lambda }{2} \cdot {[0,1,2,\ldots,{N_1} - 1]^{\rm T}},\\
		&{{\bf{x}}_{{\rm{MT}},{\rm{2}}}} = {D_{\rm{sep}}} + \frac{\lambda }{2} \cdot {[0,1,2,\ldots,{N_2} - 1]^{\rm T}},
	\end{align}
\end{subequations}
where $N_{1} = \left\lfloor {N/2} \right\rfloor$, and the inter-subarray spacing ${D_{\rm{sep}}} \in \left[ {\frac{\lambda }{2}, {L_{{\rm{ary}}}} - \frac{\lambda }{2}({N_2} - 1)} \right]$.

The beam pattern of the MTA described in \eqref{MT_geo} can be regarded the coherent superposition of two identical ULA patterns with a relative phase shift.
As a special case of Theorem~\ref{Theorem_MSE}, the MSE of MTA can be derived in closed form as a function of $D_{\rm{sep}}$ as summarized below.


\begin{corollary}
\label{coro_sep}
For the MTA with inter-subarray spacing $D_{\rm{sep}}$ and APV given in \eqref{MT_geo}, the MSE of AoA estimation obtained from $T$ snapshots at a given SNR is approximated as
\begin{align}
	{\rm{MS}}{{\rm{E}}_{{\rm{MT}}}}
	\approx& P_0^{{\rm{MT}}} \left[ {\frac{{T\gamma }}{{{J_{{\rm{MT}}}}}}\Phi \left( {\sqrt {\frac{{T\gamma }}{2}} } \right) + {J_{{\rm{MT}}}}^{ - 1}{\Gamma _{3/2}}\left( {\frac{{T\gamma }}{4}} \right)} \right] \notag\\
	&+ \frac{{2{\lambda ^2}}}{{{D_{\rm{sep}}}^2}}\sum\limits_{k = 1}^{\left\lfloor {{D_{\rm{sep}}}/\lambda } \right\rfloor } {{k^2}P_k^{{\rm{MT}}}},
	\label{MSE_SP_coro}
\end{align}
where $\gamma  = \frac{{{N^2}{\rm{SN}}{{\rm{R}}^2}}}{{1 + N{\rm{SNR}}}}$, ${J_{{\rm{MT}}}}$ is the Fisher information given by 
\begin{equation}
	{J_{{\rm{MT}}}} = T\gamma  \cdot {\pi ^2}\left( {\frac{{{N^2}}}{{24}} + \frac{{2D_{\rm{sep}}^2}}{{{\lambda ^2}}}} \right). 
	\label{J_SP_coro}
\end{equation}
The probability of selecting the $k$-th lobe ($k=0$ for the mainlobe, $k=1,\ldots,\left\lfloor {D_{\rm{sep}}/\lambda } \right\rfloor $ for the sidelobes) is
\begin{equation}
	P_k^{{\rm{MT}}} = \Phi \left( {\frac{{(2k - 1)\lambda }}{{2{D_{\rm{sep}}}\sqrt {{B_{{\rm{MT}}}}} }}} \right) - \Phi \left( {\frac{{(2k + 1)\lambda }}{{2{D_{\rm{sep}}}\sqrt {{B_{{\rm{MT}}}}} }}} \right),
	\label{pk_SP}
\end{equation}
with ${B_{{\rm{MT}}}} = \dfrac{{48\left( {2 + N{\rm{SNR}}} \right)}}{{{\pi ^2}T{N^4}{\rm{SN}}{{\rm{R}}^2}}}$. 
\end{corollary}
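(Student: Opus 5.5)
The plan is to specialize Theorem~\ref{Theorem_MSE} to the twin geometry \eqref{MT_geo} with $N_1=N_2=N/2$ (rounding handled by $\lfloor N/2\rfloor$). There are three ingredients to make explicit: the Fisher information $J$, the sidelobe peak locations $u_n$, and the selection probabilities $P_n$.

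\emph{Fisher information.} I will use $J=\frac{8\pi^2T\gamma}{N\lambda^2}{\bf x}^{\rm T}({\bf I}_N-\frac1N{\bf 1}{\bf 1}^{\rm T}){\bf x}=\frac{8\pi^2T\gamma}{\lambda^2}{\rm var}[{\bf x}]$, where ${\rm var}[{\bf x}]$ is the sample variance of the positions. By the law of total variance, ${\rm var}[{\bf x}]$ splits into the within-subarray part and the between-subarray part. The within-subarray part is the common variance of each half-wavelength ULA, $\frac{\lambda^2}{4}\frac{(N/2)^2-1}{12}$. The between-subarray part is $(D_{\rm sep}/2)^2$, since the two subarray centroids are separated by $D_{\rm sep}$. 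Substituting gives $J_{\rm MT}=T\gamma\pi^2\big(\frac{N^2-4}{24}+\frac{2D_{\rm sep}^2}{\lambda^2}\big)$. Dropping the $-4$ (large-$N$ approximation) yields \eqref{J_SP_coro}.

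\emph{Sidelobe locations.} I will write the MTA beam pattern as a ULA array factor of $N/2$ elements times the two-element factor $1+e^{j2\pi D_{\rm sep}\Delta u/\lambda}$. The dominant ambiguous peaks are then the grating lobes of this two-element factor, at $u_k-u_0=\pm k\lambda/D_{\rm sep}$ for $k=1,\ldots,\lfloor D_{\rm sep}/\lambda\rfloor$; these lie within the ULA mainlobe envelope and the visible region. By symmetry of $\pm k$, the outlier term $\sum_n P_n(u_n-u_0)^2$ becomes $2\sum_k P_k^{\rm MT}k^2\lambda^2/D_{\rm sep}^2$, which is the second line of \eqref{MSE_SP_coro}.

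\emph{Selection probabilities.} I will interpret MUSIC on the MTA as a two-step procedure. First, a coarse, unambiguous estimate of $u_0$ is obtained from a single $N/2$-element ULA. Second, this coarse estimate resolves the $\lambda/D_{\rm sep}$-periodic fine phase from the inter-subarray baseline. The coarse estimate is modeled as Gaussian with variance equal to the inverse Fisher information of one subarray. That subarray has $\gamma_{\rm sub}=\frac{N^2{\rm SNR}^2}{2(2+N{\rm SNR})}$ and position variance $\approx\lambda^2N^2/192$, which gives $J_{\rm sub}=\pi^2TN^4{\rm SNR}^2/(48(2+N{\rm SNR}))=B_{\rm MT}^{-1}$. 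The estimator then locks onto lobe $k$ exactly when the coarse error lands in the decision interval $\big((2k-1)\frac{\lambda}{2D_{\rm sep}},(2k+1)\frac{\lambda}{2D_{\rm sep}}\big)$, i.e., closer to the $k$-th grating peak than to any other. Standardizing by $\sqrt{B_{\rm MT}}$ gives \eqref{pk_SP}. Consequently $P_0^{\rm MT}=1-2\sum_{k\ge1}P_k^{\rm MT}$, and the mainlobe bracket of \eqref{MSE_ap} carries over with $J\to J_{\rm MT}$.

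\emph{Main obstacle.} The hardest step is justifying this coarse-then-fine model: that the events $V(u_k)>V(u_0)$ from \eqref{Pn} reduce to a Gaussian coarse-estimate partition with variance $B_{\rm MT}$. I would try to derive it by expanding $V(\bar u)$ as the product of the subarray envelope and the $\cos^2$ interference term. The aim is to show that, for sidelobes near the mainlobe, the comparison between lobes is governed by the subarray-envelope curvature, whose noise-induced shift has variance $J_{\rm sub}^{-1}$. If this does not go through exactly, I would state it as an approximation consistent with the discrete-peak approximation already made in \eqref{MSE_split}.
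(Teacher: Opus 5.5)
Your proposal is correct and follows the paper's Appendix~B essentially step for step. The paper likewise obtains the Fisher information by splitting the position variance into within-subarray and between-subarray parts and dropping the lower-order $N$ term, takes the grating lobes at $u_0\pm k\lambda/D_{\rm sep}$ and folds them by $\pm k$ symmetry, and gets $P_k^{\rm MT}$ from a Gaussian single-subarray estimate with variance equal to that subarray's CRB $B_{\rm MT}$, partitioned into $\lambda/D_{\rm sep}$-wide cells. On your ``main obstacle'': the paper does not derive this coarse-estimate model from the events $V(u_k)>V(u_0)$ either. It simply posits it by citing the asymptotic Gaussianity of the ULA estimate, so stating it as an approximation is exactly what the paper does. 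Here $P_0^{\rm MT}$ is \eqref{pk_SP} at $k=0$, which equals your telescoping sum taken untruncated.
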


The optimal inter-subarray spacing $D_{\rm{sep}}^*$ is determined by a one-dimensional search that minimizes the MTA MSE \eqref{MSE_SP_coro} subject to the aperture constraint, i.e., 
\begin{equation}
	\begin{split}
		{D_{\rm{sep}}}^* =\arg \mathop {\min }\limits_{{D_{\rm{sep}}}}  {\rm{MSE}}_{\rm{MT}}\left( {{{\bf{x}}_{{\rm{MT}}}}({D_{\rm{sep}}})} \right)&\\
		s.t. \quad {D_{\rm{sep}}} \in \left[ {\frac{\lambda }{2},{L_{{\rm{ary}}}} - \frac{\lambda }{2}({N_2} - 1)} \right].&
	\end{split}
	\label{Dx_search}
\end{equation}

\begin{remark}
The optimal spacing $D_{\rm{sep}}^*$ depends fundamentally on the SNR regime.
In the high-SNR regime, the MSE is dominated by the mainlobe error, so
\begin{align}
	\mathop {\lim }\limits_{{\rm{SNR}} \to \infty } {D_{{\rm{sep}}}}^* 
	&\triangleq \mathop {\lim }\limits_{{\rm{SNR}} \to \infty } \mathop {\arg \min }\limits_{{D_{{\rm{sep}}}}} {\rm{MS}}{{\rm{E}}_{{\rm{MT}}}}\notag\\
	&=\mathop {\arg \max }\limits_{{D_{{\rm{sep}}}}} {J_{{\rm{MT}}}}. 
\end{align}
Since $J_{{\rm{MT}}}$ in \eqref{J_SP_coro} increases monotonically with $D_{\rm{sep}}$, we have
\begin{align}
	\mathop {\lim }\limits_{{\rm{SNR}} \to \infty } {D_{{\rm{sep}}}}^* 
	&= \max \left( {{D_{{\rm{sep}}}}} \right) \notag\\
	&= {L_{{\rm{ary}}}} - \frac{\lambda }{2}({N_2} - 1),
	\label{Dx_highSNR}
\end{align}
indicating that at high SNR, the two subarrays are positioned at both ends of the region to maximize the effective aperture\cite{ma2024movable}.

As SNR decreases, ${\rm{MSE}}_{{\rm{MT}}}$ is dominated by the mainlobe error ($P_0^{\rm{MT}}$) and the first sidelobe ambiguity ($P_1^{\rm{MT}}$)\cite{bell1996explicit}, and \eqref{MSE_SP_coro} reduces to 
\begin{align}
	&\mathop {{\rm{lim}}}\limits_{{\rm{SNR}} \to 0} {\rm{MSE}}_{{\rm{MT}}} 
	\overset{(a)}{\approx} P_0^{{\rm{MT}}}\frac{{T\gamma }}{{2{J_{{\rm{MT}}}}}} + \frac{{2{\lambda ^2}}}{{{D_{\rm{sep}}}^2}}P_1^{{\rm{MT}}}\notag\\
	&\overset{(b)}{\approx}\frac{{12}}{{{\pi ^2}\left( {{N^2} + \frac{{48D_{\rm{sep}}^2}}{{{\lambda ^2}}}} \right)}} + \sqrt {\frac{{8{B_{{\rm{MT}}}}}}{\pi }} \frac{\lambda }{{{D_{\rm{sep}}}}}{e^{ - \frac{{{\lambda ^2}}}{{8{D_{\rm{sep}}}^2{B_{{\rm{MT}}}}}}}},
	\label{MSE_L}
\end{align}
where $(a)$ applies $\Phi(0)=\tfrac12$, $\Gamma_{3/2}(0)=0$, and truncates the sidelobe sum to its dominant $k=1$ term; $(b)$ approximates $P_0^{\rm{MT}}\approx1$ (valid since $\Phi\to\tfrac12$ makes $P_0^{\rm{MT}}\to0$), substitutes \eqref{J_SP_coro} into the resulting expression, and applies $\Phi(z) \approx \frac{1}{\sqrt{2\pi}z} e^{-z^2/2}$ \cite{athley2000direction} to $P_1^{\rm{MT}}$, with $e^{-9x}\ll e^{-x}$ for $x>0$ used to retain only the dominant exponential.

When $\mathrm{SNR}\to0$, the exponential term in \eqref{MSE_L} dominates, i.e., ${\rm{MSE}}_{\rm{MT}} \propto \exp({ - \frac{{{\lambda ^2}}}{{8{D_{\rm{sep}}}^2{B_{{\rm{MT}}}}}} })$, which gives
\begin{align}
	\mathop {{\rm{lim}}}\limits_{{\rm{SNR}} \to 0} {D_{\rm{sep}}}^*
	&\approx \mathop {{\rm{lim}}}\limits_{{\rm{SNR}} \to 0} \mathop {\arg \min }\limits_{{D_{{\rm{sep}}}}} {e^{ - \frac{{{\lambda ^2}}}{{8{D_{{\rm{sep}}}}^2{B_{{\rm{MT}}}}}}}}\notag\\
	&= \min( {D_{\rm{sep}}}) = \frac{\lambda }{2}.
\end{align}
That is, at low SNR, the two subarrays of MTA merge into a dense $\lambda/2$-spaced ULA to suppress the sidelobe level\cite{athley2003space}.
\end{remark}

\begin{table}[]
	\caption{Comparison of computational and hardware complexities among different array design schemes.}
	\label{Table_cmp}
	\renewcommand{\arraystretch}{1}
	\resizebox{\columnwidth}{!}{%
		\begin{tabular}{l|c|c}
			\hline
			\textbf{Array Geometry}  & \textbf{Computational Complexity}              & \textbf{Motors} \\ \hline
			Exhaustive Search        & ${\cal O}\!\left[e^{N_{\rm{all}} H\left(\frac{N}{N_{\rm{all}}}\right)}\right]$ & $N$             \\
			MFA & ${\cal O}\!\left(N_{\rm{all}}^{2}\right)$ & $N$ \\
			MNA & ${\cal O}\!\left(\dfrac{L_{\rm{ary}}-\frac{\lambda}{2}N_1}{\Delta d}\right)$          & $N/2$           \\ 
			MTA & ${\cal O}\!\left(\dfrac{L_{\rm{ary}}-\frac{\lambda}{2}N_2}{\Delta d}\right)$          & $1$             \\ \hline 
		\end{tabular}%
	}
\end{table}

\subsection{Computational Complexity}
Next, we compare the computational complexities of proposed MFA, MNA and MTA, detailed as follows and summarized in Table \ref{Table_cmp}.

\subsubsection{Exhaustive Search}
By contrast, exhaustive search must evaluate all possible APVs by selecting $N$ antenna positions from $N_{\rm{all}}={\left\lfloor {{L_{{\rm{ary}}}}/\Delta d} \right\rfloor }$ candidates, giving a complexity of
\begin{equation}
	\begin{split}
		{\cal O}({\rm{Exhaust}}) 
		&=\binom{N_{\rm{all}}}{N}\\
		&= \frac{{{N_{{\rm{all}}}}!}}{{N!({N_{{\rm{all}}}} - N)!}}.
	\end{split}
	\label{cmp_Exhaust}
\end{equation}
Applying Stirling's approximation $\ln n! \approx n\ln n - n$ to \eqref{cmp_Exhaust}, it gives
\begin{equation}
	{\cal O}({\rm{Exhaust}}) \approx {\cal O}\!\left[e^{N_{\rm{all}} H\left(\frac{N}{N_{\rm{all}}}\right)}\right],
	\label{cmp_exhaustive}
\end{equation}
where $H(x)\triangleq-x\ln x-(1-x)\ln(1-x)$.
This result indicates that the complexity of exhaustive search increases exponentially with the number of candidate positions $N_{\rm all}$, which is determined by the array aperture $L_{\rm ary}$ and the position resolution $\Delta d$. Consequently, exhaustive search becomes computationally prohibitive for large-aperture arrays.

\subsubsection{MFA}
The computational complexity of Algorithm~\ref{alg_greedy} consists of two parts.  
The complexity of \textbf{Stage~1} is
\begin{align}
	{\cal O}(St1) 
	&= \sum\limits_{i = 0}^{{N_{{\rm{all}}}} - N - 1} {\left( {{N_{{\rm{all}}}} - i} \right)} \notag\\
	&= \frac{{{N_{{\rm{all}}}}({N_{{\rm{all}}}} + 1) - N(N + 1)}}{2}\notag\\
	&\approx {\cal O}({N_{{\rm{all}}}}^2),
	\label{cmp_st1}
\end{align}
which grows quadratically with the available positions ${N_{\rm all}}$.

The complexity of \textbf{Stage~2} is given by
\begin{align}
	{\cal O}(St2) 
	&= \sum\limits_{i = 1}^{{I_{{\rm{itr}}}}} {\sum\limits_{n = 1}^N {\left| {{\cal C}_n^i} \right|} } \notag\\
	&\le {I_{{\rm{itr}}}}N\left( {{N_{\rm{all}}} - N\frac{\lambda }{{2\Delta d}}} \right)\notag\\
	&= {\cal O}({I_{{\rm{itr}}}}N{N_{{\rm{all}}}}),
	\label{cmp_st2}
\end{align}
where $I_{\rm itr}$ denotes the number of refinement iterations. Since the refinement procedure rapidly converges, $I_{\rm itr}$ remains small in practice and is typically less than 10.

Combining \eqref{cmp_st1} and \eqref{cmp_st2}, the computational complexity of Algorithm~\ref{alg_greedy}  can be approximated as 
\begin{align}
	{\cal O}(\rm{MFA}) 
	&= {\cal O}(St1) + {\cal O}(St2) \notag\\
	&= {\cal O}({N_{{\rm{all}}}}^2 + {I_{{\rm{itr}}}}N{N_{{\rm{all}}}})\notag\\
	&\approx  {\cal O}({N_{{\rm{all}}}}^2)={\left( {\frac{{{L_{{\rm{ary}}}}}}{{\Delta d}}} \right)^2},\label{cmp_A1}
\end{align}
where the approximation holds since $I_{\rm itr} \ll N_{\rm all}$ in practice.

Compared with the exhaustive search, whose complexity in \eqref{cmp_exhaustive} increases exponentially with $N_{\rm all}$, MFA is realized with a reduced computational complexity of a polynomial order.


\subsubsection{MNA}
The computational complexity of the MNA scheme primarily stems from the one-dimensional search over the outer-element spacing $d_{\rm{out}}$ in \eqref{d2_search}.
When the search step is set to $\Delta d$, its computational cost can be expressed as
\begin{equation}
	\begin{split}
		{\cal O}({\rm{MN}}) 
		&= \frac{{{L_{{\rm{ary}}}} - \frac{\lambda }{2}{N_1} - \frac{\lambda }{2}}}{{\Delta d}}\\
		&={\cal O}\left( {\frac{{{L_{{\rm{ary}}}}-\frac{\lambda}{2}N_1}}{{\Delta d}}} \right),
	\end{split}	
	\label{cmp_MNA}
\end{equation}
which shows that the complexity of MNA scales linearly with the residual aperture $L_{\rm{ary}}-\frac{\lambda}{2}N_1$ available to the outer subarray, and inversely with the position resolution $\Delta d$.

\subsubsection{MTA}
The computational complexity of MTA arises from a similar one-dimensional search over $D_{\rm{sep}}$ in \eqref{Dx_search}.
With the position resolution $\Delta d$, the complexity is
\begin{equation}
	\begin{split}
		{\cal O}({\rm{MT}}) 
		&=\frac{{{L_{{\rm{ary}}}} - \frac{\lambda }{2}{N_2}}}{{\Delta d}}\\
		&={\cal O}\left( {\frac{{{L_{{\rm{ary}}}}-\frac{\lambda}{2}N_2}}{{\Delta d}}} \right),
	\end{split}
\end{equation}
which scales linearly with the residual aperture $L_{\rm{ary}}-\frac{\lambda}{2}N_2$ left for the outer subarray, of the same order as MNA in \eqref{cmp_MNA}.

\subsection{Hardware Complexity}
The three MA array structures differ in the number of motors required for MA adjustments, as summarized in Table~\ref{Table_cmp}. 
The exhaustive search and MFA require $N$ independent motors to provide full antenna mobility, offering the largest configuration freedom at the highest hardware complexity.
Between them, the MFA is preferable, reducing the computational complexity from the exponential order of exhaustive search (see eq.~\eqref{cmp_exhaustive}) to a linear order (see eq.~\eqref{cmp_A1}).

By fixing the inner subarray and optimizing only the outer antennas, MNA reduces the required motors to $N/2$ while preserving sufficient array reconfigurability, as illustrated in Fig.~\ref{fig_GNA_SPAconfig} (a).  
As an extreme case with minimum hardware cost, the MTA requires only one motor to adjust the spacing between two fixed subarrays, as shown in Fig.~\ref{fig_GNA_SPAconfig} (b).
At the cost of reduced configuration flexibility, the MTA is well suited to large arrays under stringent hardware constraints.

\section{Numerical Results}

In this section, we first validate the accuracy of the derived MSE approximation via Monte Carlo (MC) simulations with $2000$ trials and $T=1000$ snapshots, benchmarked against the CRB, the EZZB~\cite{khan2013explicit}, and the existing approximation in~\cite{athley2005threshold}.
We then evaluate the proposed MFA, MNA, and MTA against the exhaustive-search global optimum (labelled Exhaustive), the state-of-the-art MA design~\cite{ma2026movable} (labelled SCA), the CRB-optimized SULA~\cite{ma2024movable} (labelled SULA), and two representative FPAs, namely ULA and MRA~\cite{moffet1968minimum}, where the latter adopts a classic sparse geometry designed to suppress sidelobes.
Unless otherwise specified, the array region is $L_{\rm ary}=30\lambda$ with $N=8$ movable antenna elements, the minimum inter-antenna spacing is $d_{\min} = \lambda/2$, position resolution is set to $\Delta d=0.25\lambda$, and the target AoA $\theta = 30^\circ$.
The tested SNR ranges from ${\rm{SNR}}\in[-20,5]$ dB.

\subsection{Effectiveness of proposed MSE Approximation and Three MA Structures}

\begin{figure}[tp]
	\centering
	\includegraphics[width=\linewidth]{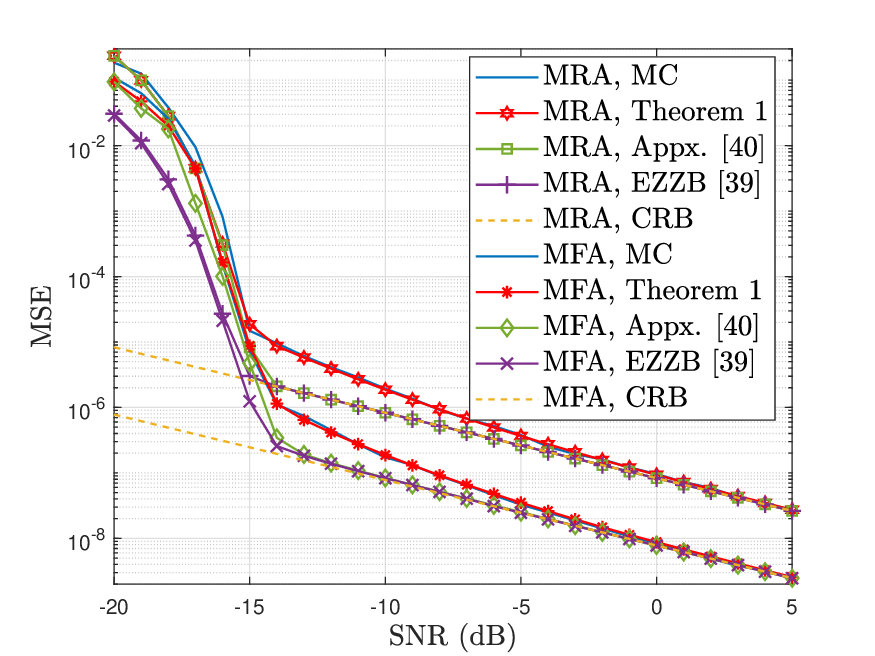}
	\caption{Validation of the proposed MSE approximation for the MA-based MFA and the MRA-based FPA under different SNR.}
	\label{fig_bound_cmp}	
\end{figure}

Fig.~\ref{fig_bound_cmp} validates the derived MSE approximation against the MC simulation, the existing approximation \cite{athley2005threshold}, the EZZB \cite{khan2009explicit}, and the CRB, under $L_{\rm{ary}}=30\lambda$, using two distinct APVs, i.e., the proposed MFA and the classic FPA-based MRA \cite{moffet1968minimum}, to verify its generality.
For both APVs, the derived expression (\textit{Proposed}) matches the MC results more closely than the existing approximation and the EZZB, accurately tracking the MSE drop over ${\rm{SNR}}\in[-18,0]$~dB, whereas the CRB fails throughout the low-to-medium SNR regime, confirming that the derived approximation effectively captures the sidelobe-induced error.
Note that the effectiveness of the derived approximation degrades when ${\rm{SNR}}<-18$~dB, where ${\rm{MSE}}\geq 10^{-2}$ already falls outside the practical sensing requirement, so its applicability for APV optimization is unaffected. 
Moreover, the MFA outperforms the MRA by one-to-two orders of magnitude for ${\rm SNR}>-15$~dB, confirming that fully exploiting the movable region via SNR-adaptive APV reconfiguration is more effective across a wide SNR range than relying on a fixed geometry.

\begin{figure}[tp]
	\centering
	\includegraphics[width=\linewidth]{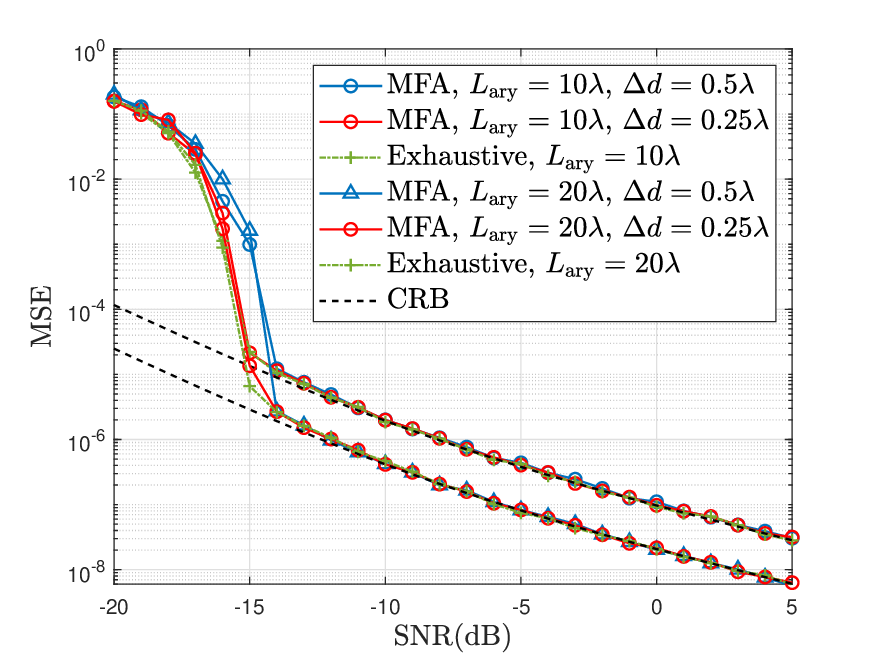}
	\caption{ Validation of the MFA against exhaustive search under different array region sizes $L_{\mathrm{ary}}$ and step sizes $\Delta d$.}
	\label{fig_stepsize_cmpExhaustive}	
\end{figure}

As illustrated in Fig. \ref{fig_stepsize_cmpExhaustive}, the MFA converges asymptotically to the exhaustive-search performance as the position grid $\Delta d$ decreases. 
In particular, using a resolution of $\Delta d=0.25\lambda$ is sufficient to attain near-exhaustive accuracy with negligible residual gaps over the tested SNR range.
Since an excessively position resolution would only incur unnecessary computational complexity, we therefore adopt $\Delta d=0.25\lambda$ in the following simulations. 
Moreover, enlarging the array region size from $L_{\mathrm{ary}}=10\lambda$ to $20\lambda$ yields little benefit in the low-SNR regime, where all curves behave almost identically for ${\rm SNR}\leq -15$ dB. 
The reason is that a sparer array geometry sharpens the mainlobe but simultaneously increases the sidelobe level, which increases the estimation error at low SNR. 
This confirms that adapting the APV to the SNR is more critical than merely enlarging the effective aperture for low-to-medium SNR sensing.


\begin{figure}[tp]
	\centering
	\includegraphics[width=\linewidth]{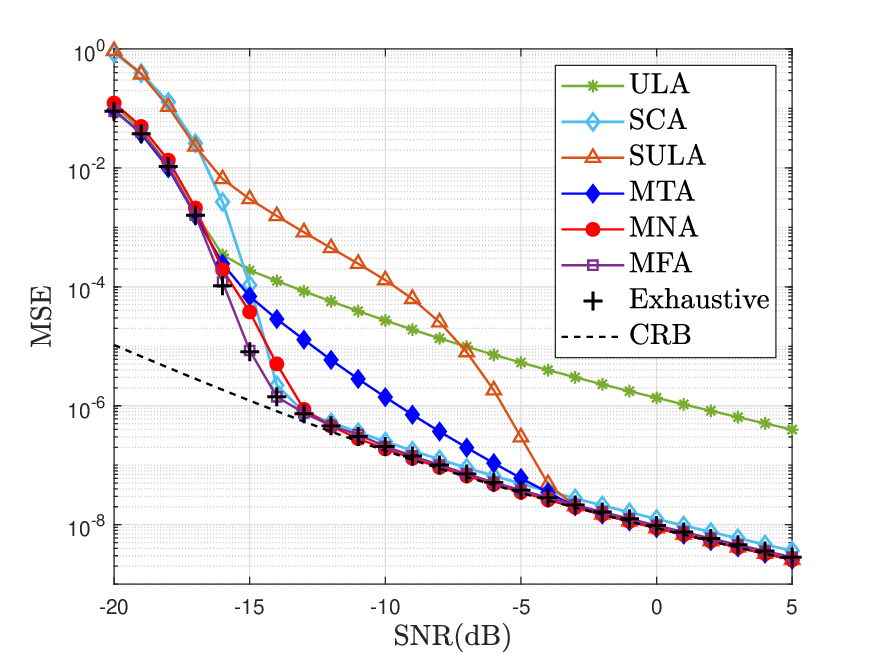}
	\caption{MSE performance of the proposed MFA, MNA, and MTA compared with MA-based SCA\cite{ma2026movable} and classic FPAs under $N=8,L_{\rm{ary}}=30\lambda$.}
	\label{fig_fixcmp_MSEap_lary30_M8_deltad0.5_noMT}
\end{figure}


Fig.~\ref{fig_fixcmp_MSEap_lary30_M8_deltad0.5_noMT} compares the proposed schemes with existing MA and representative FPA designs under $N=8$.
First, compared with the exhaustive-search global optimum, the proposed MFA achieves near-optimal performance over the tested SNR range, while requiring fewer motors and much lower computational complexity.
Besides, optimized based on the derived MSE approximation, the MFA achieves about one order-of-magnitude lower MSE over the state-of-the-art sidelobe-aware MA benchmark SCA \cite{ma2026movable} at low SNR regime, where SCA still suffers from degradation similar to SULA.
Second, the proposed MNA closely approaches the optimal Exhaustive curve with only $N/2$ motors, compared to the MFA required full $N$ motors.
The reduced DoF lead to a noticeable loss only within the narrow range of ${\rm SNR}\in[-16,-14]$~dB, and thereafter overlapping with the optimal performance.
Third, using a single motor, the MTA tracks the global optimum at both low and high SNR, with one order of magnitude MSE increment within ${\rm SNR}\in[-15,-5]$~dB, where its performance is limited by its mobility.
Nevertheless, benchmarked against the two fixed FPA baselines, the MTA still outperforms the ULA, which is favoured at low SNR yet cannot exploit the full aperture, by about one-to-two orders of magnitude for ${\rm SNR}>-10$~dB, and outperforms the SULA, the high-SNR-optimal, by up to two orders of magnitude for ${\rm SNR}<-5$~dB by adjusting the two subarrays to suppress sidelobe level.

\begin{figure}[tp]
	\centering
	\includegraphics[width=\linewidth]{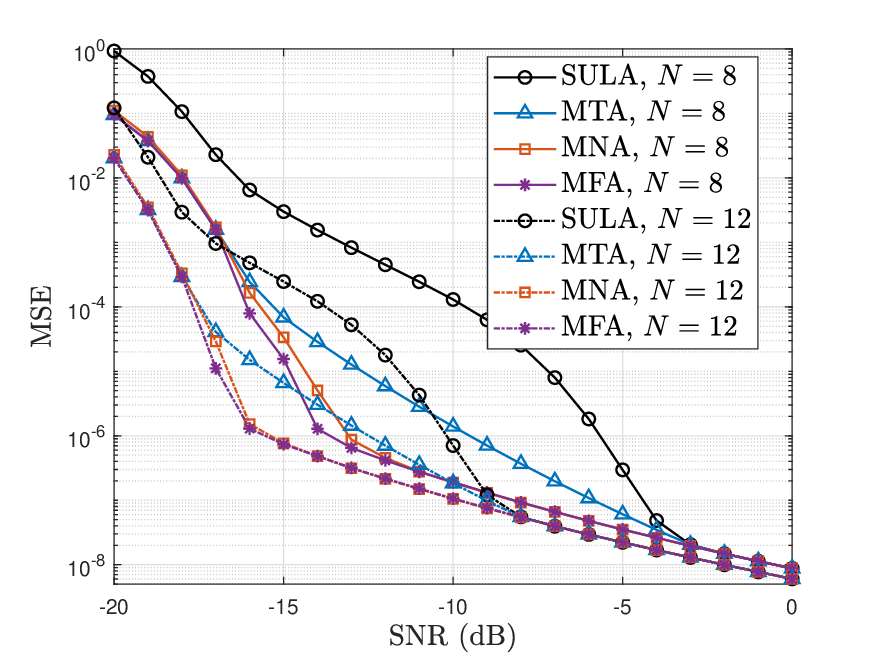}
	\caption{MSE performance comparison of the MNA and MTA structures under different numbers of MAs $N$.}
	\label{fig_nesd_sep_cmp_M8_M12}	
\end{figure}


Fig.~\ref{fig_nesd_sep_cmp_M8_M12} compares the MSE of the proposed structures under $N=8$ and $N=12$ MAs.
The $N=8$ and $N=12$ curves of each architecture nearly coincide when $\mathrm{SNR}<-15$~dB, separate visibly within $\mathrm{SNR}\in[-15,-5]$~dB, and merge again near the CRB floor at high SNR, reflecting a shift in the limiting factor across these regimes.
At low SNR, all three proposed structures converge to a similar sidelobe-suppressing, ULA-like structure, so the MSE is primarily constrained by the antenna number $N$.
Within $\mathrm{SNR}\in[-15,-5]$~dB, the structures diverge according to their motor freedom.
Specifically, the single-motor MTA degrades the most, while the MNA ($N/2$ motors) and the fully flexible MFA ($N$ motors) achieve up to one order of magnitude lower MSE.
This gap between MNA and the MFA narrows as $N$ increases to $12$, indicating that a larger antenna number can partially compensate for the limited array mobility.
By contrast, for $\mathrm{SNR}<-5$~dB, the fixed SULA with $N=12$ remains one to two orders of magnitude worse than the proposed MNA with only $N=8$, showing that increasing the antenna number cannot compensate for a fixed, SNR-mismatched geometry, highlighting the importance of SNR-adaptive array reconfiguration.


In summery, Figs.~\ref{fig_bound_cmp}-\ref{fig_nesd_sep_cmp_M8_M12} confirm the accuracy of the derived MSE approximation and the effectiveness of the proposed MA structures.
Notably, the MNA closely approaches the globally optimal exhaustive search with only $N/2$ motors, while the MTA, despite using a single motor, achieves one-order-of-magnitude MSE reduction over FPAs such as SULA.


\subsection{Analysis of Proposed Geometries}

\begin{figure}[tp]
	\centering
	\includegraphics[width=\linewidth]{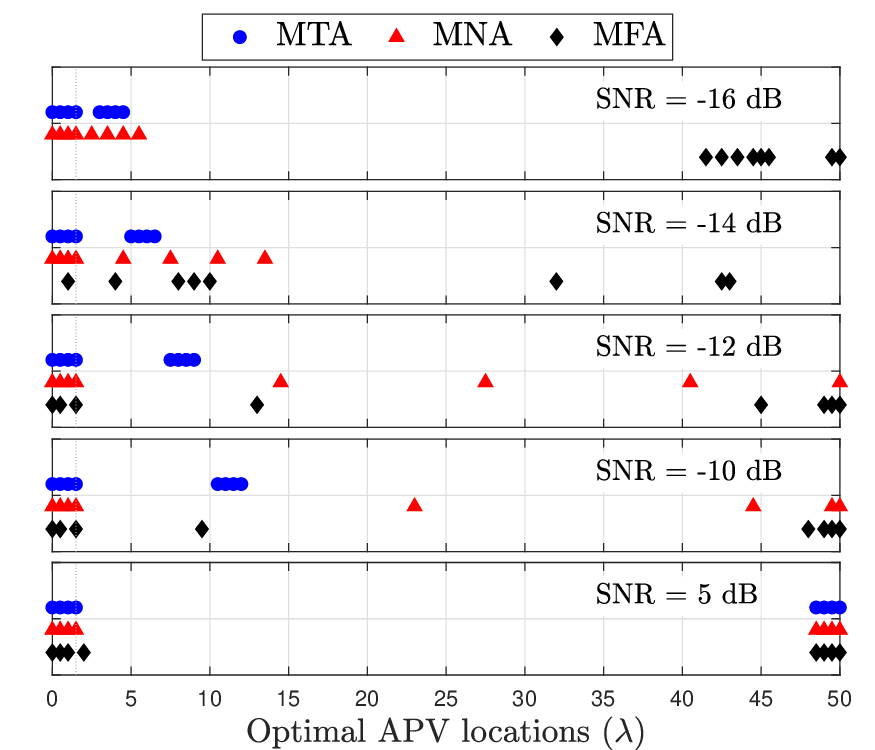}
	\caption{Optimal APVs of the proposed MTA, MNA and MFA under different SNR conditions.}
	\label{fig_NestedSep_APV}	
\end{figure}

Fig.~\ref{fig_NestedSep_APV} illustrates the optimal APVs of the MTA, MNA, and MFA under $L_{\rm{ary}}=50\lambda$ and $N=8$, highlighting their distinct geometric evolution across SNR.
The MTA and MNA follow structured geometries using only a single motor and $N/2$ motors, respectively, whereas the MFA requires $N$ fully independent motors.
At low SNR ($\mathrm{SNR}=-16$~dB), the MTA and MNA merge their two subarrays at the origin into a ULA-like geometry, while the MFA converges to a similarly compact half-wavelength geometry located near the opposite aperture boundary.
As SNR increases, the subarray spacing of the MTA grows monotonically, while the outer subarray of the MNA progressively expands toward the aperture edge to enlarge the effective aperture, and once the outer elements reach the boundary, aggregates there with half-wavelength spacing.
In contrast, the APV of MFA exhibits no consistent pattern across SNR, since its two-stage pruning may settle into different local optima at each SNR.
At high SNR, all three structures form two dense clusters at the opposite aperture ends, converging to a SULA-like structure to maximize the effective aperture.
This evolution reflects a trade-off between enlarging the array aperture and suppressing sidelobe level as SNR varies, which the MTA and MNA resolve through a structured, motor-efficient adjustment, while the MFA attains the same trade-off but requires full antenna mobility.

\begin{figure}[tp]
	\centering
	\includegraphics[width=\linewidth]{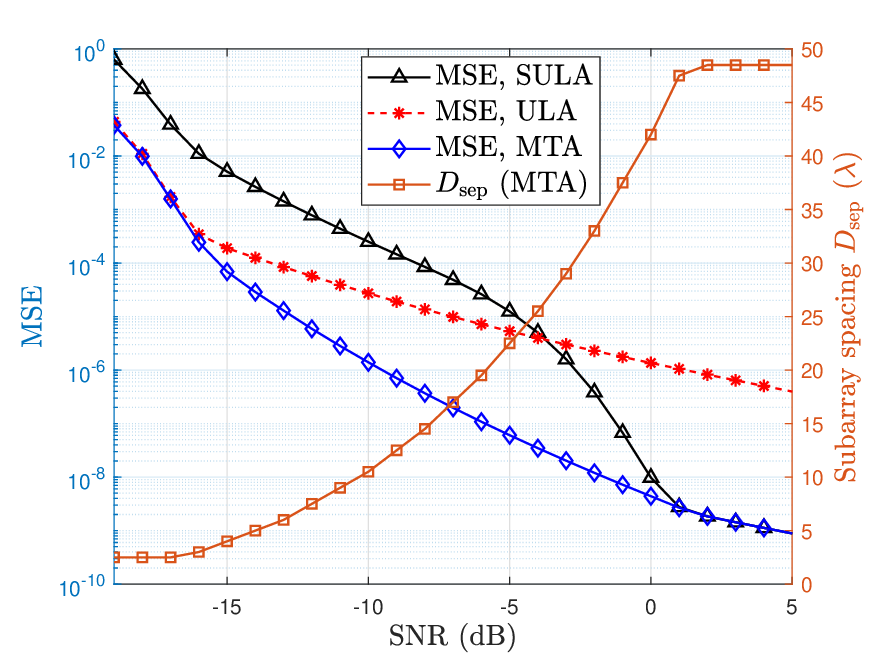}
	\caption{Optimal subarray spacing of MTA versus SNR.}
	\label{fig_dsepopt_cmpMSE}	
\end{figure}

Fig.~\ref{fig_dsepopt_cmpMSE} quantifies the evolution of the optimal inter-subarray spacing $D_{\rm{sep}}^{*}$ of MTA, and the resulting MSE.
For ${\rm{SNR}}<-16$~dB, $D_{\rm{sep}}^{*}$ remains near its minimum $\lambda/2$, so the MTA reduces to a dense ULA-like geometry that suppresses sidelobe ambiguity, appears as the overlapped curve with ULA.
As SNR increases, $D_{\rm{sep}}^{*}$ grows continuously, slowly at first and then more steeply beyond about ${\rm{SNR}}\in[-5,1]$~dB, progressively enlarging the effective aperture and gradual increasing the MSE gap to two-orders of magnitude compared to fixed ULA.
As ${\rm{SNR}}>1$~dB, $D_{\rm{sep}}^{*}\to\lambda/2$, where the two subarrays reach the opposite ends of the aperture, and the MTA curve merges with the SULA benchmark, both converging toward the CRB limited by $N$ and the region size.
This close correspondence between the continuous growth of $D_{\rm{sep}}^{*}$ and the steady MSE reduction confirms that the performance gain of the MTA arises from continuously adapting its structure to the SNR, rather than from a single fixed advantageous geometry.


\begin{figure}[tp]
	\centering
	\includegraphics[width=\linewidth]{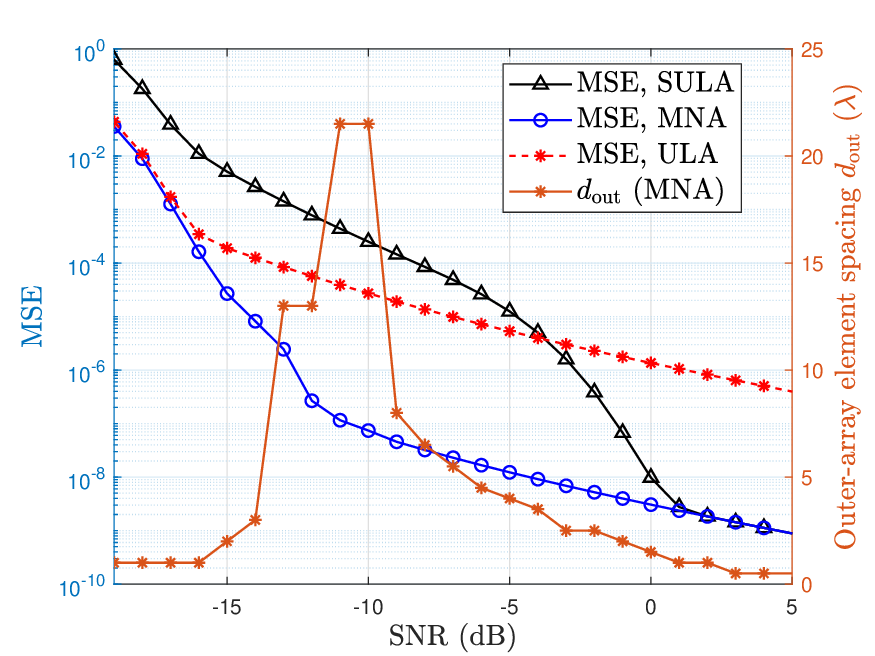}
	\caption{Optimal outer-element spacing of MNA versus SNR.}
	\label{fig_dnested}	
\end{figure}

Fig.~\ref{fig_dnested} quantifies the evolution of $d_{\rm{out}}^{*}$ to validate the non-monotonic trend observed in Fig.~\ref{fig_NestedSep_APV}.
For $\mathrm{SNR}<-16$~dB, $d_{\rm{out}}^{*}$ remains at its minimum $\lambda/2$, so the MNA reduces to a ULA-like geometry whose MSE closely tracks the fixed ULA curve.
Within $\mathrm{SNR}\in[-14,-10]$~dB, $d_{\rm{out}}^{*}$ rises sharply to enlarge the effective aperture until the outer elements reach the boundary of the movable region, yielding a pronounced MSE drop.
Beyond this point, the aperture constraint in \eqref{A2} forces the outer elements to progressively aggregate at the array edge, which appears as a decrease in $d_{\rm{out}}^{*}$.
As SNR increases further, $d_{\rm{out}}^{*}\to\lambda/2$,  where the outer elements settle near the boundary with spacing, and the MNA curve merges with the SULA curve, confirming its convergence to the SULA geometry at high SNR.
Overall, this evolution allows the MNA to inherit the sidelobe robustness of a dense ULA at low SNR while fully exploiting the available aperture at high SNR, yielding a one-to-three-order-of-magnitude MSE reduction over the fixed SULA in the low-to-medium SNR regime, and about one order of magnitude over the fixed ULA at high SNR.

\begin{figure}[tp]
	\centering
	\includegraphics[width=\linewidth]{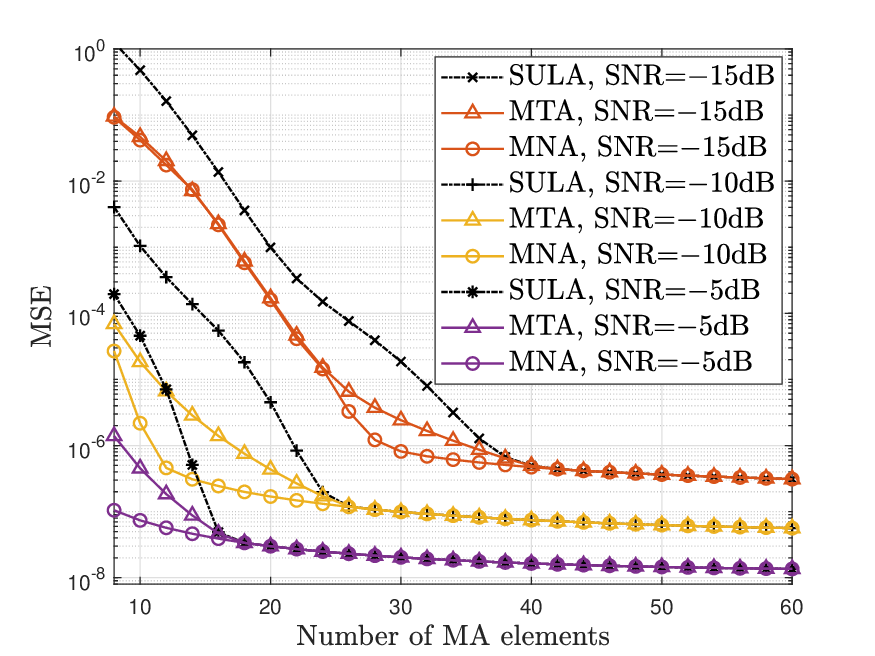}
	\caption{MSE performance versus the number of MA elements.}
	\label{fig_Mcmp_nested_sep}	
\end{figure}

Fig.~\ref{fig_Mcmp_nested_sep} compares the MSE of the proposed MNA, MTA, and the fixed SULA versus the number of MA elements $N$ under $L_{\rm{ary}}=30\lambda$.
When $N$ is limited, i.e., $N<40$ for ${\rm SNR}=-15$~dB, $N<25$ for ${\rm SNR}=-10$~dB, and $N<15$ for ${\rm SNR}=-5$~dB, both MNA and MTA substantially outperform the fixed SULA, the high-SNR-optimal geometry~\cite{ma2024movable}, with the gap widening as $N$ decreases.
As $N$ increases beyond these thresholds, all three structures progressively converge to a common MSE floor.
Owing to the additional degree of freedom offered by its $N/2$ movable outer elements, MNA attains up to a two-order-of-magnitude MSE reduction over SULA, exceeding the one-order-of-magnitude reduction achieved by the single-motor MTA in the same element-limited regime.
Specifically, at ${\rm SNR}=-10$~dB, MNA requires only about $1/4$, while MTA requires about $1/2$, of the elements needed by SULA to reach the same MSE.
These results indicate that, in the element-limited and low-to-medium SNR regime, SNR-adaptive APV reconfiguration is more effective for improving estimation accuracy than simply increasing the antenna number, with MNA offering the best accuracy-hardware trade-off among the proposed structures owing to its extra motor freedom.


Overall, Figs.~\ref{fig_NestedSep_APV}-\ref{fig_Mcmp_nested_sep} confirm that, in the low-to-medium SNR regime, SNR-adaptive MAs are more effective than simply increasing the number of fixed-position antennas.
Among the proposed schemes, the MNA closely approaches the globally optimal geometry with a lower-complexity structured design, while the MTA, at the lowest hardware cost, still delivers a substantial accuracy gain with a single motor.


\section{Conclusion}

This paper proposed adaptive MA structures to enhance AoA estimation accuracy across the low-to-medium SNR regime.
We first derived a deterministic MSE approximation that captures the local estimation error within the mainlobe and the outlier error induced by ambiguity-function sidelobes, thereby overcoming the limitation of the CRB in low-SNR regime.
We then developed three adaptive MA structures that trade off hardware and computational complexities. 
For systems with fully independent motors, the MFA is optimized by a two-stage greedy search, which approaches the globally optimal exhaustive search at lower computational complexity. 
By fixing the dense inner subarray and adjusting only the outer-subarray spacing, the MNA halves the required motors while closely approaching the globally optimal performance.
Tuning only the inter-subarray spacing with a single motor, the MTA further reduces the hardware complexity, while still achieving an order-of-magnitude MSE reduction over the fixed SULA geometry.
Numerical results confirm that in the low-to-medium SNR regime, SNR-adaptive APV improves estimation accuracy more effectively than simply increasing the antenna number.
The proposed structures offer hardware- and computationally-efficient solutions for robust AoA estimation across a wide SNR range.
Future work will extend to multi-target sensing and two-dimensional MA design. 


\begin{appendices}	
\section{}\label{appendix_MSE}
Following the MSE decomposition in \eqref{MSE_split}, this appendix derives the sidelobe-selection probability $P_n$ and the mainlobe estimation error $\mathbb{E}[\epsilon_u^2]$.
Selecting the $n$-th sidelobe $u_n$ over the mainlobe $u_0$ is equivalent to a binary hypothesis test between the two, whose probability $P_n\triangleq\Pr[V(u_n)>V(u_0)]$ follows from \cite{proakis2001digital} as 
\begin{align}
	{P_n} &= Q\left( {\sqrt {\frac{S}{2}\left( {1 - \sqrt {1 - r_n^2} } \right)} ,\sqrt {\frac{S}{2}\left( {1 + \sqrt {1 - r_n^2} } \right)} } \right)\notag\\
	&- {e^{ - \frac{S}{2}}}{I_0}\left( {\frac{{{r_n}S}}{2}} \right)\left[ {1 - \frac{1}{{{2^{2T - 1}}}}\sum\limits_{m = 0}^{T -1}{\binom{2T-1}{m}} } \right]\notag\\
	&+ \frac{1}{{{2^{2T - 1}}}}\sum\limits_{m = 1}^{T - 1} {\left\{ {{e^{ - \frac{S}{2}}}{I_m}\left( {\frac{{{r_n}S}}{2}} \right)\sum\limits_{k = 0}^{2T - 1 - m}\binom{2T-1}{k} } \right.}\notag \\
	&\left. { \times \left[ {{{\left( {\frac{{1 + \sqrt {1 - r_n^2} }}{{{r_n}}}} \right)}^m} - {{\left( {\frac{{1 - \sqrt {1 - r_n^2} }}{{{r_n}}}} \right)}^m}} \right]} \right\},
	\label{Pn_expression}
\end{align}
where $S\triangleq NT\cdot{\rm{SNR}}$ is the effective SNR accumulated over $N$ antennas and $T$ snapshots, $r_n\triangleq \frac{1}{N}{\left| {{{\bf{\alpha }}^H}({\bf{x}},{u_n}){\bf{\alpha }}({\bf{x}},{u_0})} \right|}\in[0,1]$ is the ambiguity function at the $n$-th sidelobe location $u_n$ and the true angle $u_0$, $I_m(x)$ is the order-$m$ modified Bessel function of the first kind, 
\begin{equation}
	I_m(x) = \sum_{k=0}^{\infty} \frac{(x/2)^{2k+m}}{k!\Gamma(m+k+1)},
	\label{bessel}
\end{equation}
and $Q(\alpha, \beta)$ is Marcum's $Q$ function,
\begin{equation}
	Q(\alpha, \beta) = \int_{\beta}^{\infty} t e^{-(t^2+\alpha^2)/2} I_0(\alpha t)\, dt.
	\label{MarcumQ}
\end{equation}

Next, we derive the mainlobe error $\mathbb{E}[\epsilon_u^2]$ in \eqref{MSE_split}. 
Consider a random $u_0$ with a priori pdf $p(u)$, $-1\le u\le1$.
Relating MSE as a binary detection problem, $\mathbb{E}[\epsilon_u^2]$ is lower-bounded by the minimum error probability of the optimal binary hypothesis test between $u$ and $u+\delta$~\cite{bell1996explicit}, giving
\begin{align}
	{\bf{E}}[{\epsilon_u}^2]
	& \triangleq\frac{1}{2}\int_0^{{u_s}} {\Pr \left( {\left| {\hat u - {u_0}} \right| \ge \frac{\delta }{2}} \right)\delta\, d\delta } \notag\\
	&\ge \int_0^{{u_s}} A(\delta)\,\Phi \left( {\frac{\delta}{2}\sqrt J } \right)\delta\,d\delta,
	\label{ZZB}
\end{align}
where $\Phi(z)=\int_z^\infty \frac{1}{\sqrt{2\pi}}e^{-v^2/2}dv$ is the Gaussian $Q$-function, $A(\delta)\triangleq\int_{-\infty}^{+\infty}\min[p(u),p(u+\delta)]\,du$ is the ambiguity between the two hypotheses, $J$ denotes the Fisher information, and $u_s$ is the mainlobe width.

Denote the normalized beampattern correlation as
\begin{equation}
	G(\delta)\triangleq\frac{1}{{{N^2}}}{\left| {{{\bf{\alpha }}^H}({\bf{x}},u){\bf{\alpha }}({\bf{x}},u + \delta )} \right|^2},
\end{equation}
where $G(0)=1$ at the mainlobe peak and $-G''(0)>0$ represents the curvature of the beampattern at this peak. 
By definition, the mainlobe width $u_s$ satisfies $G(u_s)=0$. Using the Taylor expansion $G(\delta)\approx1+\frac{1}{2}G''(0)\delta^2$, we have 
\begin{equation}
	{u_s} \approx \sqrt {\frac{2}{{ - G''(0)}}}.
	\label{us}
\end{equation}

Governed by the same curvature, the Fisher information is 
\begin{align}
	J &= - T\gamma \cdot G''(0) \notag\\
	&=T\gamma\frac{2}{N}\left(\frac{2\pi}{\lambda}\right)^2{{\bf{x}}^{\rm T}}\left( {{{\bf{I}}_N} - \frac{1}{N}{\bf{1}}{{\bf{1}}^{\rm T}}} \right){\bf{x}},
	\label{J}
\end{align}
where $\gamma=\frac{N^2{\rm{SNR}}^2}{1+N{\rm{SNR}}}$ and ${\bf 1}$ is an $N\times1$ all-ones vector.

Within the mainlobe ($\delta<u_s$), $p(u)$ is approximately uniform, so $A(\delta)\approx1$ \cite{khan2009explicit}. 
Integrating \eqref{ZZB} by parts,
\begin{align}
	{\bf{E}}[{\epsilon_u}^2]
	&\approx\frac{1}{2}\left. {\Phi \left( {\frac{\delta}{2}\sqrt J } \right){\delta^2}} \right|_0^{{u_s}} - \frac{1}{2}\int_0^{{u_s}} {{\delta^2}\,d\Phi \left( {\frac{\delta}{2}\sqrt J } \right)} \notag \\
	&=\frac{1}{2}\Phi \left( {\frac{{{u_s}}}{2}\sqrt J } \right){u_s}^2 + {J^{ - 1}} \cdot \frac{2}{{\sqrt \pi  }}\int_0^{\frac{{{u_s}^2J}}{8}} {\sqrt v\, {e^{ - v}}\,dv} \notag \\
	&=\frac{1}{2}\Phi \left( {\frac{{{u_s}}}{2}\sqrt J } \right){u_s}^2 + {J^{ - 1}}{\Gamma _{3/2}}\left( {\frac{{{u_s}^2J}}{8}} \right),
	\label{integral_cal}
\end{align}
where $\Gamma_{3/2}(z)=\frac{2}{\sqrt{\pi}}\int_0^z v^{1/2}e^{-v}dv$ is the regularized lower incomplete gamma function.
Using \eqref{J} and \eqref{us} yields
\begin{equation}
	\frac{u_s}{2}\sqrt{J}=\sqrt{\frac{T\gamma}{2}},
\end{equation} 
which reduces \eqref{integral_cal} to
\begin{equation}
	{\bf{E}}[{\epsilon_u}^2] = \frac{{T\gamma }}{J}\Phi \left( {\sqrt {\frac{{T\gamma }}{2}} } \right) + {J^{ - 1}}{\Gamma _{3/2}}\left( {\frac{{T\gamma }}{4}} \right).
	\label{integral_final}
\end{equation}
Substituting \eqref{Pn_expression}, \eqref{us} and \eqref{integral_final} into \eqref{MSE_split} yields the closed-form MSE approximation in \eqref{MSE_ap}.

Next, we derive the high-SNR asymptote of \eqref{MSE_ap} to establish \eqref{MSE_theorem_highSNR_ap}. 
Applying the Marcum $Q$-function bound $Q(\alpha,\beta)\le\exp[-(\beta-\alpha)^2/2]$ and the Bessel-function asymptotic $I_m(x)\sim e^{x}/\sqrt{2\pi x}$ as $x\to\infty$ to \eqref{Pn_expression}, the two terms comprising $P_n$ both vanish exponentially in $S$: 
\begin{align}
	&Q\!\left(\!\sqrt{\tfrac{S}{2}(1\!-\!\sqrt{1\!-\!r_n^2})},\sqrt{\tfrac{S}{2}(1\!+\!\sqrt{1\!-\!r_n^2})}\!\right)
	\le e^{-\frac{S\sqrt{1-r_n^2}}{2}} \to 0,\notag\\
	&e^{-\frac{S}{2}}I_m\!\left(\frac{r_nS}{2}\right)
	\sim \frac{e^{-(1-r_n)S/2}}{\sqrt{\pi r_nS}}\to 0,\notag
\end{align}
as $S\to\infty$, hence $\lim_{S\to\infty}P_n=0$, i.e., the outlier probability is negligible at high SNR.
Meanwhile, since $\Phi(z)\to0$ and $\Gamma_{3/2}(z)\to1$ as $z\to\infty$ \cite{khan2009explicit}, \eqref{integral_final} reduces to $\mathbb{E}[\epsilon_u^2]\to J^{-1}$. 
Substituting this together with $\lim_{S\to\infty}P_n=0$ into \eqref{MSE_split} gives $\lim_{{\rm SNR}\to\infty}{\rm MSE}({\bf x})=J^{-1}$; substituting \eqref{J} into this limit yields the CRB expression in \eqref{MSE_theorem_highSNR_ap}, completing the proof.

\section{}\label{appendix_sepMSE}

Substituting the MTA geometry \eqref{MT_geo} into the general result \eqref{MSE_ap} of Theorem~\ref{Theorem_MSE}, the MSE of MTA is
\begin{align}
	{\rm{MS}}{{\rm{E}}_{{\rm{MT}}}}
	\approx\;& P_0^{{\rm{MT}}} \left[ {\frac{{T\gamma }}{{{J_{{\rm{MT}}}}}}\Phi \left( {\sqrt {\frac{{T\gamma }}{2}} } \right)+ {J_{{\rm{MT}}}}^{ - 1}{\Gamma _{3/2}}\left( {\frac{{T\gamma }}{4}} \right)} \right] \notag\\
	&+ \sum\limits_{m\neq0}  {P_m^{{\rm{MT}}}{{({u_m} - {u_0})}^2}}, \label{MSE_SP_apdx}
\end{align}
where $P_0^{\rm{MT}}$ and $P_m^{\rm{MT}}$ denote the probabilities of selecting the mainlobe and the $m$-th sidelobe respectively. $m=\pm1,\pm2,\ldots,\pm\left\lfloor\frac{D_{\rm{sep}}}{\lambda}\right\rfloor$, with $m>0$ ($m<0$) corresponding to lobes on the right (left) of the mainlobe, and $u_{m}$ its location.
$J_{\rm{MT}}$ is the Fisher information of MTA.

For even $N$, the MTA consists of two $N_1=N_2=N/2$-element half-wavelength ULAs separated by $D_{\rm{sep}}$.
Denoting their steering vectors by ${\boldsymbol{\alpha}}_{\rm{MT},1}(u)$ and ${\boldsymbol{\alpha}}_{\rm{MT},2}(u)$, \eqref{array vector} gives
\begin{equation}
	{{\boldsymbol{\alpha}}_{{\rm{MT}},2}}(u) = {e^{j\frac{{2\pi }}{\lambda }{D_{\rm{sep}}}u}}{{\boldsymbol{\alpha}}_{{\rm{MT}},1}}(u),
\end{equation}
so the beam pattern of MTA is the coherent sum of the two subarray patterns.
A sidelobe occurs when the two subarrays realign in phase, i.e., $\frac{2\pi}{\lambda}D_{\rm{sep}}(u-u_0)=2\pi m$, giving
\begin{align}
	&{u_m} = \mathop {\arg }\limits_{u \in [ - 1,1]} \left[ {{e^{j\frac{{2\pi }}{\lambda }{D_{\rm{sep}}}(u-u_0)}} = 1} \right]\\
	\Leftrightarrow&{u_{m}} = u_0+\frac{m\lambda}{D_{\rm{sep}}}.
	\label{uk}
\end{align}
Because $u_m\in[-1,1]$, the total number of sidelobe is
\begin{equation}
	{N_p^{\rm{MT}}} = 2\left\lfloor {\frac{{{D_{\rm{sep}}}}}{\lambda }} \right\rfloor.
	\label{np}
\end{equation}

Each subarray is a dense ULA, whose AoA estimate is asymptotically Gaussian-distributed about $u_0$ with variance equal to its own CRB $B_{\rm{MT}}$ \cite{van2004detection}.
As the sidelobes are spaced by $\lambda/D_{\rm{sep}}$ \eqref{uk}, the $m$-th sidelobe is selected when this Gaussian estimate falls within $[u_m-\frac{\lambda}{2D_{\rm{sep}}},u_m+\frac{\lambda}{2D_{\rm{sep}}}]$, giving
\begin{align}
	P_m^{{\rm{MT}}} = \int_{{u_m} - \frac{\lambda}{2{D_{\rm{sep}}}}}^{{u_m} + \frac{\lambda }{2{D_{\rm{sep}}}}} {\frac{1}{{\sqrt {2\pi {B_{{\rm{MT}}}}} }}} \exp \left[ { - \frac{{(u - {u_0})}^2}{{2{B_{{\rm{MT}}}}}}} \right]du\notag\\
	= \Phi \left( {\frac{{(2m - 1)\lambda }}{{2{D_{\rm{sep}}}\sqrt {{B_{{\rm{MT}}}}} }}} \right) - \Phi \left( {\frac{{(2m + 1)\lambda }}{{2{D_{\rm{sep}}}\sqrt {{B_{{\rm{MT}}}}} }}} \right),
	\label{pk_sp}
\end{align}
where $\Phi(z)=\int_z^\infty \frac{1}{\sqrt{2\pi}}e^{-v^2/2}dv$ is the Gaussian $Q$-function.
The mainlobe probability ($m=0$) is thus
\begin{equation}
	P_0^{{\rm{MT}}} = 1 - 2\Phi \left( \frac{\lambda }{{2{D_{\rm{sep}}}\sqrt {{B_{{\rm{MT}}}}} }} \right).
	\label{p0_sp}
\end{equation}

Substituting the CRB expression \eqref{CRB_x} for a single $N_1$-element subarray under effective SNR $\gamma_1\triangleq\frac{N_1^2{\rm{SNR}}^2}{1+N_1{\rm{SNR}}}$ yields
\begin{align}
	{B_{{\rm{MT}}}} 
	&= \frac{{{\lambda ^2}}}{{8{\pi ^2}T\gamma_1}}\frac{{{N_1}}}{{\frac{{{\lambda ^2}}}{4}\frac{{{N_1}({N_1}^2 - 1)}}{{12}}}}\notag\\
	&\overset{(a)}{\approx} \frac{{6}}{{{\pi ^2}T\gamma_1{N_1}^2}}\\
	&\overset{(b)}{=} \frac{{48(2 + N{\rm{SNR}})}}{{{\pi ^2}T{N^4}{\rm{SN}}{{\rm{R}}^2}}}, \label{CRB_SP} 
\end{align}
where $(a)$ holds for $N_1^2\gg1$, and $(b)$ substitutes $N_1=N/2$.

Next, we analyse the Fisher information of MTA, $J_{\rm{MT}}$.
Substituting \eqref{MT_geo} into \eqref{J} and writing $\bar{\bf{x}}\triangleq\frac1N\sum_{j=1}^N{\bf{x}}_{\rm{MT}}(j)$ for the array centroid,
\begin{equation}
	{J_{{\rm{MT}}}} = T\gamma  \cdot \frac{2}{N}{\left( {\frac{{2\pi }}{\lambda }} \right)^2}\sum\limits_{i = 1}^N {{{\left( {{{\bf{x}}_{{\rm{MT}}}}(i) - {\bf{\bar x}}} \right)}^2}}.
	\label{J_SP_def}
\end{equation}
Since the two $N_1$-element subarrays are symmetric about $\bar{\bf{x}}$, the position-variance sum can be split into the variance of a single half-wavelength ULA and an inter-subarray term, giving
\begin{align}
	\sum\limits_{i = 1}^N {{{\left( {{{\bf{x}}_{{\rm{MT}}}}(i) - {\bf{\bar x}}} \right)}^2}}
	&=\frac{\lambda^2N_1(N_1^2-1)}{24}+\frac{ND_{\rm{sep}}^2}{4}\notag\\
	&\approx\frac{\lambda^2N^3}{192}+\frac{ND_{\rm{sep}}^2}{4},
	\label{var_MT}
\end{align}
where the approximation drops the ${\cal O}(N)$ term relative to the leading ${\cal O}(N^3)$ term.
Substituting \eqref{var_MT} into \eqref{J_SP_def} gives
\begin{equation}
	J_{\rm{MT}} = T\gamma\,\pi^2\left(\frac{N^2}{24}+\frac{2D_{\rm{sep}}^2}{\lambda^2}\right). \label{J_SP} 
\end{equation}

From \eqref{uk}, $u_m-u_0=m\lambda/D_{\rm{sep}}$ is odd in $m$, while $B_{\rm{MT}}$ in \eqref{CRB_SP} does not depend on the sign of $m$, so $P_m^{{\rm{MT}}}$ in \eqref{pk_sp} is even in $m$, i.e., $P_{-m}^{{\rm{MT}}}=P_m^{{\rm{MT}}}$.

Hence the two-sided sidelobe sum in \eqref{MSE_SP_apdx} pairs up into a one-sided sum indexed by $k\triangleq|m|$, expressed as 
\begin{equation}
	\sum_{m\neq0}P_m^{{\rm{MT}}}(u_m-u_0)^2 = \frac{2\lambda^2}{D_{\rm{sep}}^2}\sum_{k=1}^{\lfloor D_{\rm{sep}}/\lambda\rfloor}k^2P_k^{{\rm{MT}}}.
	\label{fold_sum}
\end{equation}
Substituting\eqref{CRB_SP}, \eqref{J_SP} and\eqref{fold_sum} into \eqref{MSE_SP_apdx} yields the analytical MSE expression of the MTA summarized in \eqref{MSE_SP_coro}.

\end{appendices}

\bibliography{MA_edit}

@article{you2025next,
	title={Next generation advanced transceiver technologies for {6G} and beyond},
	author={You, Changsheng and Cai, Yunlong and Liu, Yuanwei and Di Renzo, Marco and Duman, Tolga M and Yener, Aylin and Swindlehurst, A Lee},
	fjournal={IEEE Journal on Selected Areas in Communications},
	journal = {IEEE J. Sel. Areas Commun.},
	volume={43},
	number={3},
	pages={582--627},
	year={2025},
	month=Mar,
	publisher={IEEE}
}

@article{liu2022survey,
	title={A survey on fundamental limits of integrated sensing and communication},
	author={Liu, An and Huang, Zhe and Li, Min and Wan, Yubo and Li, Wenrui and Han, Tony Xiao and Liu, Chenchen and Du, Rui and Tan, Danny Kai Pin and Lu, Jianmin and Shen, Yuan and Colone, Fabiola and Chetty, Kevin},
	journal={IEEE Commun. Surv. Tutor.},
	fjournal={IEEE Communications Surveys \& Tutorials},
	volume={24},
	number={2},
	pages={994--1034},
	year={2022},
	month = {Secondquarter},
	publisher={IEEE}
}

@article{song2023intelligent,
	title={Intelligent reflecting surface enabled sensing: {Cram\'er-Rao} bound optimization},
	author={Song, Xianxin and Xu, Jie and Liu, Fan and Han, Tony Xiao and Eldar, Yonina C},
	journal={IEEE Trans. Signal Process.},
	fjournal={IEEE Transactions on Signal Processing},
	volume={71},
	pages={2011--2026},
	year={2023},
	month=Mar,
	publisher={IEEE}
}

@article{lu2014overview,
	title={An overview of massive {MIMO}: Benefits and challenges},
	author={Lu, Lu and Li, Geoffrey Ye and Swindlehurst, A Lee and Ashikhmin, Alexei and Zhang, Rui},
	journal={IEEE J. Sel. Top. Signal Process.},
	fjournal={IEEE Journal of Selected Topics in Signal Processing},
	volume={8},
	number={5},
	pages={742--758},
	year={2014},
	month= Oct,
	publisher={IEEE}
}

@article{wang2024tutorial,
	title={A tutorial on extremely large-scale {MIMO} for {6G}: Fundamentals, signal processing, and applications},
	author={Wang, Zhe and Zhang, Jiayi and Du, Hongyang and Niyato, Dusit and Cui, Shuguang and Ai, Bo and Debbah, M{\'e}rouane and Letaief, Khaled B and Poor, H Vincent},
	journal={IEEE Commun. Surv. Tutor.},
	fjournal={IEEE Communications Surveys \& Tutorials},
	volume={26},
	number={3},
	pages={1560--1605},
	year={2024},
	month={thirdquarter} ,
	publisher={IEEE}
}

@article{roberts2011sparse,
	title={Sparse antenna array design for {MIMO} active sensing applications},
	author={Roberts, William and Xu, Luzhou and Li, Jian and Stoica, Petre},
	fjournal={IEEE Transactions on Antennas and Propagation},
	journal={IEEE Trans. Antennas Propag.},
	volume={59},
	number={3},
	pages={846--858},
	year={2011},
	month= Mar,
	publisher={IEEE}
}

@article{cen2009linear,
	title={Linear sparse array synthesis with minimum number of sensors},
	author={Cen, Ling and Ser, Wee and Yu, Zhu Liang and Rahardja, Susanto and Cen, Wei},
	fjournal={IEEE Transactions on Antennas and Propagation},
	journal={IEEE Trans. Antennas Propag.},
	volume={58},
	number={3},
	pages={720--726},
	year={2010},
	month= Mar,
	publisher={IEEE}
}

@inproceedings{wang2023can,
	title={Can sparse arrays outperform collocated arrays for future wireless communications?},
	author={Wang, Huizhi and Zeng, Yong},
	booktitle={Proc. 2023 IEEE Globecom Workshops (GC Wkshps)},
	pages={667--672},
	year={2023},
	organization={IEEE},
	address={Kuala Lumpur, Malaysia}
}

@article{zhu2024movable,
	title={Movable Antenna Enabled Near-Field Communications: Channel Modeling and Performance Optimization},
	author={Lipeng Zhu and Wenyan Ma and Zhenyu Xiao and Rui Zhang},
	fjournal={IEEE Transactions on Communications},
	journal={IEEE Trans. Commun.},
	year={2025},
	month= Sep,
	volume={73},
	number={9},
	pages={7240-7256}
}

@ARTICLE{ma2026movabletsp,
	author={Ma, Wenyan and Zhu, Lipeng and Zhang, Rui},
	fjournal={IEEE Transactions on Signal Processing}, 
	journal   = {IEEE Trans. Signal Process.},
	title={Movable Antenna Enhanced Integrated Sensing and Communication Via Antenna Position Optimization}, 
	year={2026},
	month= Sep,
	volume={74},
	number={},
	pages={1522-1537},
}

@article{mao2025movable,
	author={H. Mao and L. Zhu and W. Ma and Z. Xiao and X.-G. Xia and R. Zhang},
	title={Movable-Antenna Array Enhanced Multi-Target Sensing: {CRB} Characterization and Optimization},
	journal={arXiv preprint arXiv:2511.18907},
	year={2025}
}

@article{wang2025movable,
	author={Y. Wang and W. Mei and X. Wei and Y.-F. Wu and Z. Chen and B. Ning},
	title={Movable Antenna Empowered Near-Field Sensing via Antenna Position Optimization},
	journal={arXiv preprint arXiv:2512.00758},
	year={2025}
}

@book{cinlar2013introduction,
	author    = {Cinlar, Erhan},
	title     = {Introduction to Stochastic Processes},
	edition   = {{Dover} reprint},
	series    = {Dover Books on Mathematics Series},
	address   = {Mineola, NY, USA},
	publisher = {Dover Publications},
	year      = {2013},
	isbn      = {978-0-486-49797-6}
}

@book{van2004detection,
	author    = {Van Trees, Harry L.},
	title     = {Detection, Estimation, and Modulation Theory, Part {I}: Detection, Estimation, and Linear Modulation Theory},
	address   = {Hoboken, NJ, USA},
	publisher = {John Wiley \& Sons},
	year      = {2004},
}

@Article{athley2005threshold,
	author    = {Athley, Fredrik},
	journal   = {IEEE Trans. Signal Process.},
	title     = {Threshold region performance of maximum likelihood direction of arrival estimators},
	year      = {2005},
	month= Apr,
	number    = {4},
	pages     = {1359--1373},
	volume    = {53},
	fjournal  = {IEEE Transactions on Signal Processing},
	publisher = {IEEE},
}

@book{proakis2001digital,
	author    = {Proakis, John G. and Salehi, Masoud},
	title     = {Digital Communications},
	edition   = {4th},
	address   = {Boston, MA, USA},
	publisher = {McGraw-Hill},
	year      = {2001},
	isbn      = {978-0-07-232111-1}
}

@inproceedings{khan2009explicit,
	title={Explicit {Ziv-Zakai} bound for {DOA} estimation with sparse linear arrays},
	author={Khan, Diba and Bell, Kristine L},
	booktitle={Proc. 2009 3rd IEEE International Workshop on Computational Advances in Multi-Sensor Adaptive Processing (CAMSAP)},
	pages={257--260},
	year={2009},
	address={Aruba, Netherland Antilles},
	organization={IEEE}
}

@article{khan2013explicit,
	title={Explicit {Ziv-Zakai} bound for analysis of {DOA} estimation performance of sparse linear arrays},
	author={Khan, Diba and Bell, Kristine L},
	fjournal={Signal processing},
	journal={Signal Process.},
	volume={93},
	number={12},
	pages={3449--3458},
	year={2013},
	publisher={Elsevier}
}

@article{ziv1969some,
	title={Some lower bounds on signal parameter estimation},
	author={Ziv, Jacob and Zakai, Moshe},
	journal={IEEE Trans. Inf. Theory},
	volume={15},
	number={3},
	pages={386--391},
	year={1969},
	month=May,
	publisher={IEEE}
}

@article{weiss1985lower,
	title={A lower bound on the mean-square error in random parameter estimation (corresp.)},
	author={Weiss, A and Weinstein, Ehud},
	fjournal={IEEE Transactions on information theory},
	journal   = {IEEE Trans. Inf. Theory},
	volume={31},
	number={5},
	pages={680--682},
	year={1985},
	month=Sep,
	publisher={IEEE}
}

@Article{bell1997extended,
	author    = {Bell, Kristine L and Steinberg, Yossef and Ephraim, Yariv and Van Trees, Harry L},
	journal   = {IEEE Trans. Inf. Theory},
	title     = {Extended {Ziv-Zakai} lower bound for vector parameter estimation},
	year      = {1997},
	month=Mar,
	number    = {2},
	pages     = {624--637},
	volume    = {43},
	fjournal  = {IEEE Transactions on Information Theory},
	publisher = {IEEE},
}

@Article{bell1996explicit,
	author   = {Bell, K.L. and Ephraim, Y. and Van Trees, H.L.},
	journal  = {IEEE Trans. Signal Process.},
	title    = {Explicit {Ziv-Zakai} lower bound for bearing estimation},
	year     = {1996},
	month=Nov,
	number   = {11},
	pages    = {2810-2824},
	volume   = {44},
	fjournal = {IEEE Transactions on Signal Processing},
}

@inproceedings{khan2010analysis,
	title={Analysis of {DOA} estimation performance of sparse linear arrays using the {Ziv-Zakai} bound},
	author={Khan, Diba and Bell, Kristine L},
	booktitle={Proc. 2010 IEEE Radar Conference},
	pages={746--751},
	year={2010},
	address={Arlington, VA, USA,},
	organization={IEEE}
}

@phdthesis{athley2003space,
	author      = {F. Athley},
	title       = {Space-time parameter estimation in radar array processing},
	school      = {Dept. Signals and Systems, Chalmers Univ. Technology},
	address     = {Gothenburg, Sweden},
	year        = {2003}
}

@Article{athley2007radar,
	author    = {Athley, Fredrik and Engdahl, Christer and Sunnergren, Per},
	journal   = {IEEE Trans. Aerosp. Electron. Syst.},
	title     = {On radar detection and direction finding using sparse arrays},
	year      = {2007},
	month=Oct,
	number    = {4},
	pages     = {1319--1333},
	volume    = {43},
	fjournal  = {IEEE Transactions on Aerospace and Electronic Systems},
	publisher = {IEEE},
}

@article{ma2024movable,
	title={Movable antenna enhanced wireless sensing via antenna position optimization},
	author={Ma, Wenyan and Zhu, Lipeng and Zhang, Rui},
	journal= {IEEE Trans. Wireless Commun.},
	volume={23},
	number={11},
	pages={16575--16589},
	year={2024},
	month=Nov,
	fjournal  = {IEEE Transactions on Wireless Communications},
	publisher={IEEE}
}

@article{zhu2026tutorial,
	title={A tutorial on movable antennas for wireless networks},
	author={Zhu, Lipeng and Ma, Wenyan and Mei, Weidong and Zeng, Yong and Wu, Qingqing and Ning, Boyu and Xiao, Zhenyu and Shao, Xiaodan and Zhang, Jun and Zhang, Rui},
	journal={IEEE Commun. Surv. Tutor.},
	fjournal={IEEE Communications Surveys \& Tutorials},
	volume={28},
	pages={3002--3054},
	year={2026},
	publisher={IEEE}
}

@article{shao20246d,
	title={{6D} movable antenna based on user distribution: Modeling and optimization},
	author={Shao, Xiaodan and Jiang, Qijun and Zhang, Rui},
	fjournal={IEEE Transactions on Wireless Communications},
	journal = {IEEE Trans. Wireless Commun.},
	volume={24},
	number={1},
	pages={355--370},
	year={2025},
	month= Jan, 
	publisher={IEEE}
}

@article{zhu2024movabletcm,
	title={Movable antennas for wireless communication: Opportunities and challenges},
	author={Zhu, Lipeng and Ma, Wenyan and Zhang, Rui},
	fjournal={IEEE Communications Magazine},
	journal={IEEE Commun. Mag.},
	volume={62},
	number={6},
	pages={114--120},
	year={2024},
	month=Jun,
	publisher={IEEE}
}

@article{ma2023mimo,
	title={{MIMO} capacity characterization for movable antenna systems},
	author={Ma, Wenyan and Zhu, Lipeng and Zhang, Rui},
	journal={IEEE Trans. Wireless Commun.},
	fjournal={IEEE Transactions on Wireless Communications},
	volume={23},
	number={4},
	pages={3392--3407},
	year={2024},
	month=Apr,
	publisher={IEEE}
}

@article{zhu2024movabletwc,
	title={Movable-antenna enhanced multiuser communication via antenna position optimization},
	author={Zhu, Lipeng and Ma, Wenyan and Ning, Boyu and Zhang, Rui},
	journal={IEEE Trans. Wireless Commun.},
	fjournal={IEEE Transactions on Wireless Communications},
	volume={23},
	number={7},
	pages={7214--7229},
	year={2024},
	month=Jul,
	publisher={IEEE}
}

@Article{mei2024movable,
	author    = {Mei, Weidong and Wei, Xin and Ning, Boyu and Chen, Zhi and Zhang, Rui},
	journal   = {IEEE Wireless Commun. Lett.},
	title     = {Movable-antenna position optimization: A graph-based approach},
	year      = {2024},
	month=Jul,
	number    = {7},
	pages     = {1853--1857},
	volume    = {13},
	fjournal  = {IEEE Wireless Communications Letters},
	publisher = {IEEE},
}

@article{shao20256d,
	title={{6D} movable antenna enhanced wireless network via discrete position and rotation optimization},
	author={Shao, Xiaodan and Zhang, Rui and Jiang, Qijun and Schober, Robert},
	fjournal={IEEE Journal on Selected Areas in Communications},
	journal={IEEE J. Sel. Areas Commun.},
	volume={43},
	number={3},
	pages={674--687},
	year={2025},
	month=Mar,
	publisher={IEEE}
}

@article{xiao2024channel,
	title={Channel estimation for movable antenna communication systems: a framework based on compressed sensing},
	author={Xiao, Zhenyu and Cao, Songqi and Zhu, Lipeng and Liu, Yanming and Ning, Boyu and Xia, Xiang-Gen and Zhang, Rui},
	journal={IEEE Trans. Wireless Commun.},
	fjournal={IEEE Transactions on Wireless Communications},
	volume={23},
	number={9},
	pages={11814--11830},
	year={2024},
	month=Sep,
	publisher={IEEE}
}

@book{van2002modulation,
	author    = {Van Trees, Harry L.},
	title     = {Detection, Estimation, and Modulation Theory, Part {IV}: Optimum Array Processing},
	publisher = {John Wiley \& Sons},
	address   = {New York, NY, USA},
	year      = {2002},
	isbn      = {978-0-471-09390-9}
}

@book{kay1993fundamentals,
	author    = {Kay, Steven M.},
	title     = {Fundamentals of Statistical Signal Processing: Estimation Theory},
	publisher = {Prentice Hall},
	address   = {Upper Saddle River, NJ, USA},
	year      = {1993}
}

@Article{moffet1968minimum,
	author    = {Moffet, Alan},
	journal   = {IEEE Trans. Antennas Propag.},
	title     = {Minimum-redundancy linear arrays},
	year      = {1968},
	month=Mar,
	number    = {2},
	pages     = {172--175},
	volume    = {16},
	fjournal  = {IEEE Transactions on Antennas and Propagation},
	publisher = {IEEE},
}

@Article{pal2010nested,
	author    = {Pal, Piya and Vaidyanathan, Palghat P},
	journal   = {IEEE Trans. Signal Process.},
	title     = {Nested arrays: {A} novel approach to array processing with enhanced degrees of freedom},
	year      = {2010},
	month= Aug,
	number    = {8},
	pages     = {4167--4181},
	volume    = {58},
	fjournal  = {IEEE Transactions on Signal Processing},
	publisher = {IEEE},
}

@Article{vertatschitsch1991impact,
	author   = {Vertatschitsch, E.J. and Haykin, S.},
	journal  = {IEEE Trans. Antennas Propag.},
	title    = {Impact of linear array geometry on direction-of-arrival estimation for a single source},
	year     = {1991},
	month=May,
	number   = {5},
	pages    = {576-584},
	volume   = {39},
	fjournal = {IEEE Transactions on Antennas and Propagation},
}

@inproceedings{athley2000direction,
	title={Direction-of-arrival estimation using separated subarrays},
	author={Athley, Fredrik and Engdahl, Christer},
	booktitle={Proc. 34th Asilomar Conference on Signals, Systems and Computers (Cat. No. 00CH37154)},
	volume={1},
	pages={585--589},
	year={2000},
	address={Pacific Grove, CA, USA},
	organization={IEEE}
}

@inproceedings{athley2001optimization,
	title={Optimization of element positions for direction finding with sparse arrays},
	author={Athley, Fredrik},
	booktitle={Proc. 11th IEEE Signal Processing Workshop on Statistical Signal Processing (Cat. No. 01TH8563)},
	pages={516--519},
	year={2001},
	address={Singapore},
	organization={IEEE}
}

@article{ma2026movable,
	title={Movable Antennas for Robust Wireless Sensing via Joint {Cram\'er-Rao} Bound and Sidelobe Minimization},
	author={Ma, Wenyan and Zhu, Lipeng and Zhai, Weitong and Zhang, Rui},
	journal={arXiv preprint arXiv:2606.23154},
	year={2026}
}

@article{shao20256dma,
	title={{6DMA} enhanced wireless network with flexible antenna position and rotation: Opportunities and challenges},
	author={Shao, Xiaodan and Zhang, Rui},
	fjournal={IEEE Communications Magazine},
	journal={IEEE Commun. Mag.},
	volume={63},
	number={4},
	pages={121--128},
	year={2025},
	month=Apr,
	publisher={IEEE}
}
\bibliographystyle{IEEEtran}
\end{document}